\renewcommand{\algorithmicrequire}{\textbf{Input: }}
\algnewcommand{\algorithmicand}{\textbf{ and }}
\algnewcommand{\algorithmicor}{\textbf{ or }}
\algnewcommand{\Or}{\algorithmicor}
\algnewcommand{\And}{\algorithmicand}
\algnewcommand{\var}{\texttt}
\newtheorem{lemma}{Lemma}
\def\BibTeX{{\rm B\kern-.05em{\sc i\kern-.025em b}\kern-.08em
    T\kern-.1667em\lower.7ex\hbox{E}\kern-.125emX}}
\begin{document}

\title{GX-Plug: a Middleware for Plugging Accelerators to Distributed Graph Processing}

\author{Kai Zou, Xike Xie$^*$, Qi Li, Deyu Kong\\
Data Darkness Lab, University of Science and Technology of China\\
\{slnt, likamo, cavegf\}@mail.ustc.edu.cn, $^*$xkxie@ustc.edu.cn
}

\maketitle

\begin{abstract}
Recently, research communities highlight the necessity of formulating a scalability continuum for large-scale graph processing, which gains the scale-out benefits from distributed graph systems, and the scale-up benefits from high-performance accelerators. To this end, we propose a middleware, called the GX-plug, for the ease of integrating the merits of both.
As a middleware, the GX-plug is versatile in supporting different runtime environments, computation models, and programming models.
More, for improving the middleware performance, we study a series of techniques, including pipeline shuffle, synchronization caching and skipping, and workload balancing, for intra-, inter-, and beyond-iteration optimizations, respectively.
Experiments show that our middleware efficiently plugs accelerators to representative distributed graph systems, e.g., GraphX and Powergraph, with up-to 20x acceleration ratio.

\end{abstract}

\begin{IEEEkeywords}
Distributed graph systems, Middleware, accelerators
\end{IEEEkeywords}

%
%
%

\section{Introduction}

Big graph analytics are often with large data volumes, high computation intensiveness, and diversified applications, such as social networks, Internets, traffic networks, and biological structures, just to name a few.
To meet the scaling-out challenge \cite{HPCNeuroscience}, an increasing number of distributed graph systems, including GraphX \cite{DBLP:conf/osdi/GonzalezXDCFS14} and PowerGraph \cite{DBLP:conf/osdi/GonzalezLGBG12}, are proposed and deployed.
To meet the scaling-up challenge, non-distributed graph systems, such as Gunrock\cite{DBLP:conf/ppopp/WangDPWRO16} and ThunderGP\cite{DBLP:conf/fpga/ChenTCHWC21}, incorporate accelerators, including GPUs, multi-core CPUs, and FPGAs.
Beyond merely scaling-out or scaling-up, recent research spotlights the vision of a scalability continuum \cite{DBLP:journals/cacm/SakrBVIAAAABBDV21}, where
distributed graph systems and accelerators can be integrated for elastic scaling of big graph systems deployed in data centers.

The high computational concentration in cloud services makes an appealing case for accelerating applications on distributed systems. For example, multi-core processors, like GPUs~\cite{url:NVIDIASpark3} and multi/many-core CPUs~\cite{DBLP:conf/asplos/LindermanCWM08, DBLP:conf/ipps/HaidarCYLTKD14}, have been deployed as accelerators in distributed instances of cloud services, such as Amazon EC2, Google cloud,
Microsoft Azure Blob, and HW cloud, for flexibly scaling up the performance to application demands.
Also, Nvidia has announced the plan to support Spark $3.0$ with GPU acceleration in 2020~\cite{url:NVIDIASpark3, url:DatabrickSpark3}. So, it becomes a natural technology trend for integrating accelerators with distributed graph systems.

However, it is more challenging for distributed graph systems, because there exist a large number of system variants \cite{DBLP:journals/cacm/SakrBVIAAAABBDV21}, due to the diversity and irregularity of distributed graph processing.
They are with different architectures, runtime environments (Java and C++), 
and computation models\footnote{BSP (Bulk Synchronous Parallel) is a parallel model that performs computation in iterative steps, including three steps of computation, communication, and synchronization.
BSP model \cite{DBLP:conf/sigmod/MalewiczABDHLC10} has been the most fundamental and popular execution approach on distributed graph systems.
GAS (Gather-Apply-Scatter) model \cite{DBLP:conf/osdi/GonzalezLGBG12} is another basic and widely adopted model for distributed graph processing, based on BSP \cite{DBLP:journals/pvldb/AmmarO18}.}
and programming models\footnote{It includes vertex- and edge-centric models\cite{DBLP:journals/pvldb/Ozsu2019Keynote}.}.
In this work, we propose a middleware, the GX-Plug, where accelerators can be neatly plugged to heterogenous distributed graph systems.
With such a middleware, users can economically scale up their graph systems, to avoid the overhead of replanting to a new accelerator-aided graph systems, and save the efforts of accelerator accessing and subsequent optimizations.  

\begin{figure}[t]
    \centering
    \vspace{-5pt}
    \includegraphics[width=\linewidth]{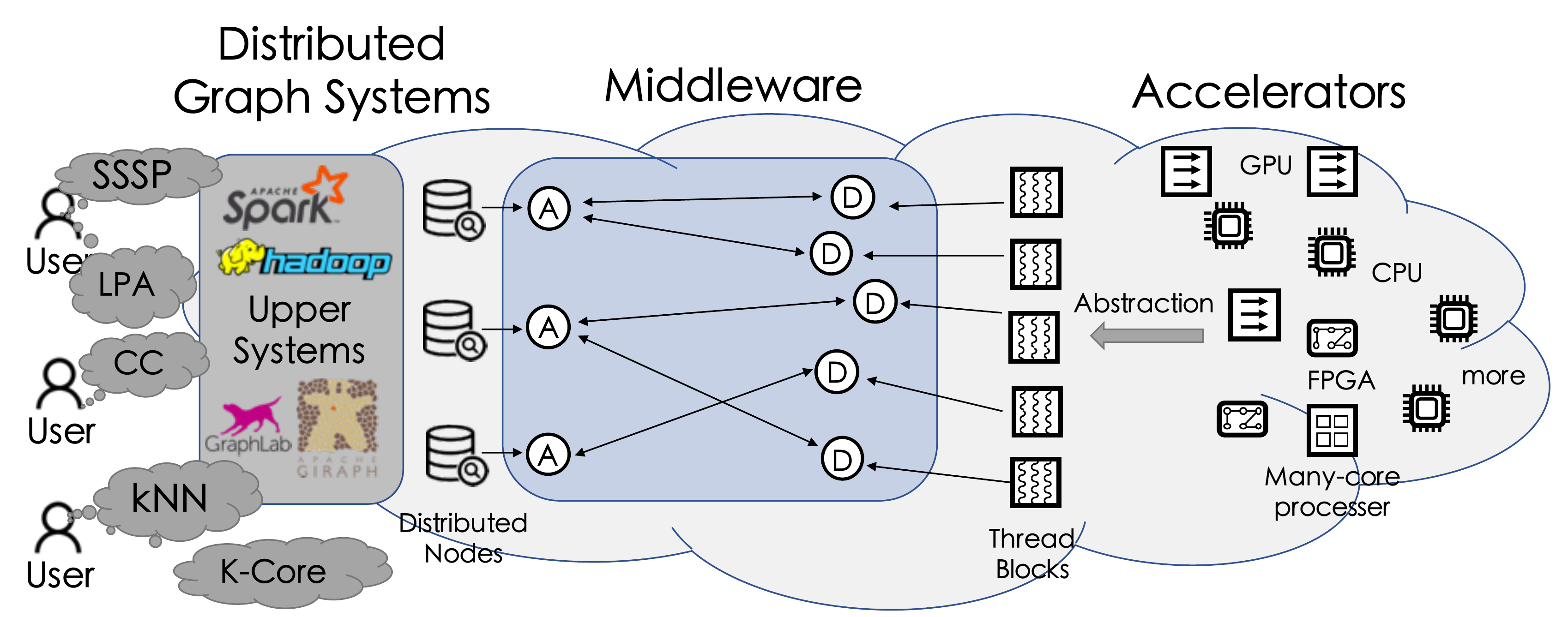}
    \vspace{-15pt}
    \caption{Middleware Overview (A: Agent, D: Daemon)}
    \label{fig:MOverview}
    \vspace{-5pt}
\end{figure}

A bird's eye view of the middleware is shown in Figure~\ref{fig:MOverview}, which follows an agile framework, called the {\it daemon-agent} framework. A daemon is a multi-core processor, an abstract representation of an accelerator, enabling transparent integration of accelerators (GPUs or CPUs), to a distributed graph system (GraphX or PowerGraph), by holding customizable graph programming interfaces.
An agent is resided in a distributed node for bridging upper systems and daemons, covering data exchanging and daemon life-cycle controlling.
A agent connects one or more daemons, according to the number of accelerators that the system allocates, for flexible computation resource distribution and workload balancing.
With the daemon-agent framework, the middleware shows flexibility in supporting different runtime environments, computation models, and programming models.

For easy accessing to accelerators, daemon provides a graph algorithm template, based on conventional iterative models, and support transplanting existing distributed graph algorithms with ease. Agent provides a kit of interfaces to cooperate with daemons and uppers systems for global graph computation.
Accordingly, it takes only a few lines of code to plug accelerators to upper systems.

%




Nevertheless, there arise a series of research challenges for implementing the middleware, besides the software design and development efforts on the adaption to different runtimes (Java or C++) and different accelerators (GPUs or CPUs).
First, there exists considerable data transmission overhead for the middleware in delivering and translating the data payloads into desired formats, between upper systems and accelerators, causing \emph{intra-iteration} overhead.
Second, the irregular and complex graph structure incurs imbalanced workloads, as well as latencies in frequent global synchronization, causing \emph{inter-iteration} overhead.
Third, it is difficult to schedule the workload and computation resource for different tasks and system configurations, recognized as {\it beyond-iteration} overhead.
The overheads can much degrade the middleware performance.

We tackle the first challenge by incorporating pipeline shuffle for optimizing the data transferring between daemons and agents.
We tackle the second challenge by optimizing the process of data synchronization, including caching and skipping, so as to minimize the volumes of data transferring during the synchronization phase.
We tackle the third challenge by making the size of transferred data blocks self-adaptive to the workloads of distribute nodes, and therefore the system workload balancing can be improved.

In this work, we focus on the implementation and optimization of the middleware. We are aware of optimization techniques, either on the accelerator end, e.g.,
exploring memory hierarchies \cite{DBLP:journals/corr/abs-1904-02241, DBLP:journals/taco/WangWLWZG21} for accelerating on-chip data accessing
or reinforcing local GPU processing networks with NVLink and NVSwitch\cite{8763922,url:nvlink};
or on the upper system end, e.g., using RDMA\cite{DBLP:conf/ccgrid/FuVSIY18} for faster distributed system communication and using pull-push model\cite{DBLP:journals/pvldb/JiaKSMEA17} for data transferring optimization in specific applications.
We would like to argue that optimizations merely on upper system or accelerator end are beyond the scope of the middleware, and are orthogonal to our work.

Our contributions can be listed as follows.


\begin{itemize}
    \item We propose, to our best knowledge, the first middleware for arming distributed graph systems with high-performance accelerators, to meet the needs of scaling-out and -up in big graph analytics.
    \item For the middleware, we design a novel daemon-agent framework, which achieves flexible deployment on different upper systems and easy accessing to accelerators.
    \item The middleware is general in supporting different computation models, such as BSP and GAS. Existing distributed graph algorithms can be transplanted for accessing accelerators with ease.
    \item For the middleware optimization, we investigate a series of techniques, such as pipeline shuffling, synchronization caching and skipping, and workload balancing, for intra-, inter-, and beyond-iteration optimizations, respectively.
    \item We conduct extensive experiments on real datasets to evaluate the efficiency and scalability of the middleware.
\end{itemize}

The rest of the paper is organized as follows.
Section~\ref{sec:mo} shows the overview of the middleware.
Section~\ref{sec:ro} investigates optimization techniques used to improve the internal performance of middleware.
Section~\ref{sec:si} discusses middleware deployment techniques.
Section~\ref{sec:ev} reports the results of empirical studies.
Section~\ref{sec:relate} presents related works.
Section~\ref{sec:con} concludes the paper.
Our middleware is open sourced\footnote{https://thoh-testarossa.github.io/GX-Plug/\label{source}}.

\section{Middleware Overview}
\label{sec:mo}

%
%
%

Cloud services are witnessed to evolve from cloud storage services, comprised of a multitude of distributed nodes/machines/instances, to high-performance cloud computing services, comprised of accelerator-powered distributed nodes, as aforementioned. 
Our middleware is to boost graph computing on such cloud services, supporting system configuration and application development with ease.
In this section, we investigate the daemon-agent framework, which is the core of the middleware, in Section~\ref{subsec:framework}. Then, we study the data storage and the controllers of the middleware, in Sections~\ref{subsec:DataManagement} and \ref{subsec:ctrl}, respectively.




\subsection{Daemon-Agent Framework}
\label{subsec:framework}

The structures of daemons and agents, and their interactions are shown in Figure~\ref{fig:DAFOverview}. In general, daemons are in connection with accelerators, and agents are in connection with upper systems.
The communication of the two parts is done via the System V IPC.

\begin{figure}[ht]
    \centering
    \vspace{-5pt}
    \includegraphics[width=\linewidth]{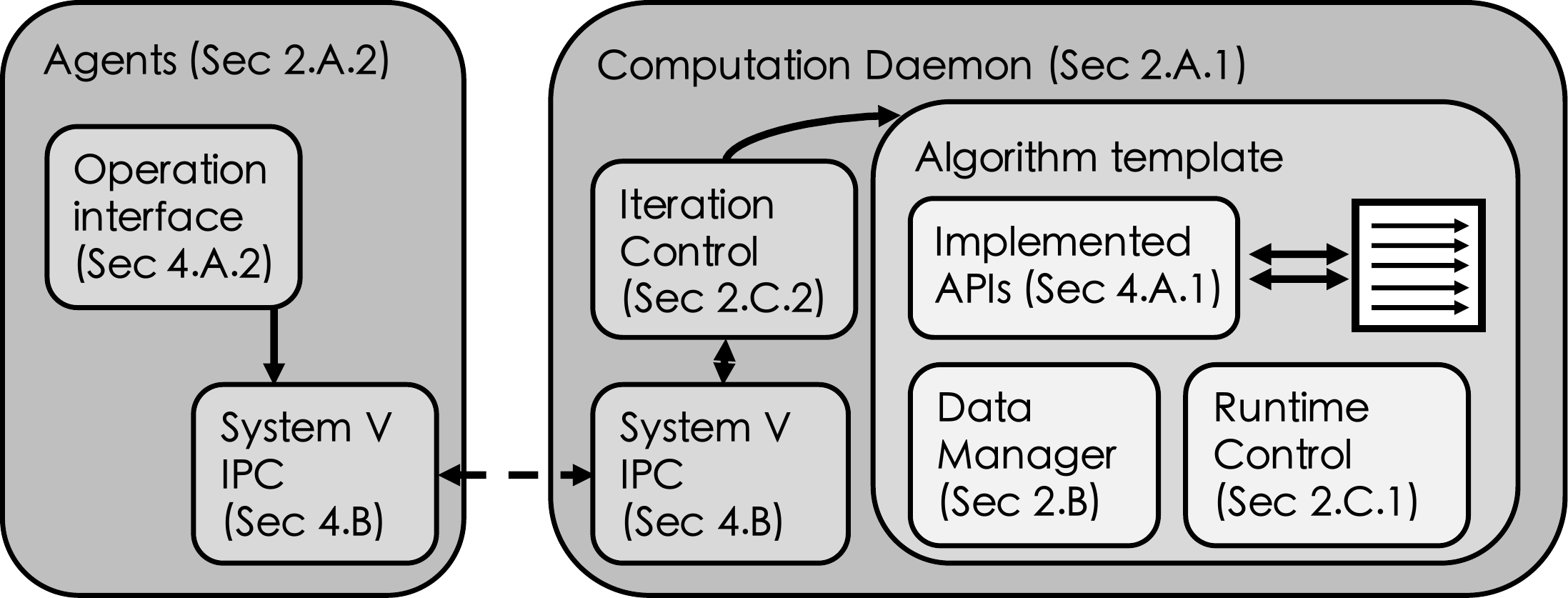}
    \vspace{-10pt}
    \caption{Daemon-Agent Framework}
    \label{fig:DAFOverview} 
    \vspace{-5pt}
\end{figure}



\subsubsection{Daemon}

\label{subsubsec:daemon}

A daemon represent an accelerator, where graph algorithms are executed.
A daemon thus holds an algorithm template and the iteration logic controlling, as shown in Figure~\ref{fig:DAFOverview}.
The design of a daemon is towards transparent hardware management for upper systems.
At the runtime, an instance of the algorithm template is implemented for daemons.
For accelerating distributed graph algorithms,
algorithm engineers only focus on the implementation of the APIs of the algorithm template.
The connection with accelerators are established during the initialization phase, and details are hidden to system developers after that.



\subsubsection{Agent}

\label{subsubsec:agent}

An agent represent a distributed node of an upper system and makes a bridge for upper systems and daemons.
Essentially, an agent covers a set of operation interfaces between upper systems and daemons, on data exchanging, subfunction execution, and daemon lifecycle controlling.
With the operation interfaces on agents, upper systems can substantially configure and control daemons, including specifying the number of accelerators and mixing and matching different types of accelerators in a system.
The structure of an agent is shown in Figure \ref{fig:DAFOverview}.


In the local environment of a distributed node, there should be at least one agent, 
and one or more daemons, representing different accelerators.
Also, as shown in Section~\ref{sec:ro}, agents are equipped with a series of optimization techniques to reduce the overhead caused by data transferring, which is the major source affecting system performances.

%
%
%

\subsection{Data Flows \& Management}
\label{subsec:DataManagement}


%
%
%
%
%


The challenges in the data management of the middleware are two-fold:
1) upper systems and accelerators can be of different runtime environments (C++ and Java);
2) data in different upper systems may follow different (vertex- or edge-centric) storage strategies. More, the data transferring should be efficient in order to meet the system runtime requirements.

The data flow in the middleware is shown in Figure~\ref{fig:DataStorageOverview}.
To tackle the efficiency challenge, the graph data is neither stored in the agent side, nor in the daemon side. Instead, data is stored in the shared memory space based on the System V IPC.


{
\begin{figure}[ht]
    \centering
    \vspace{-5pt}
    \includegraphics[width=0.8\linewidth]{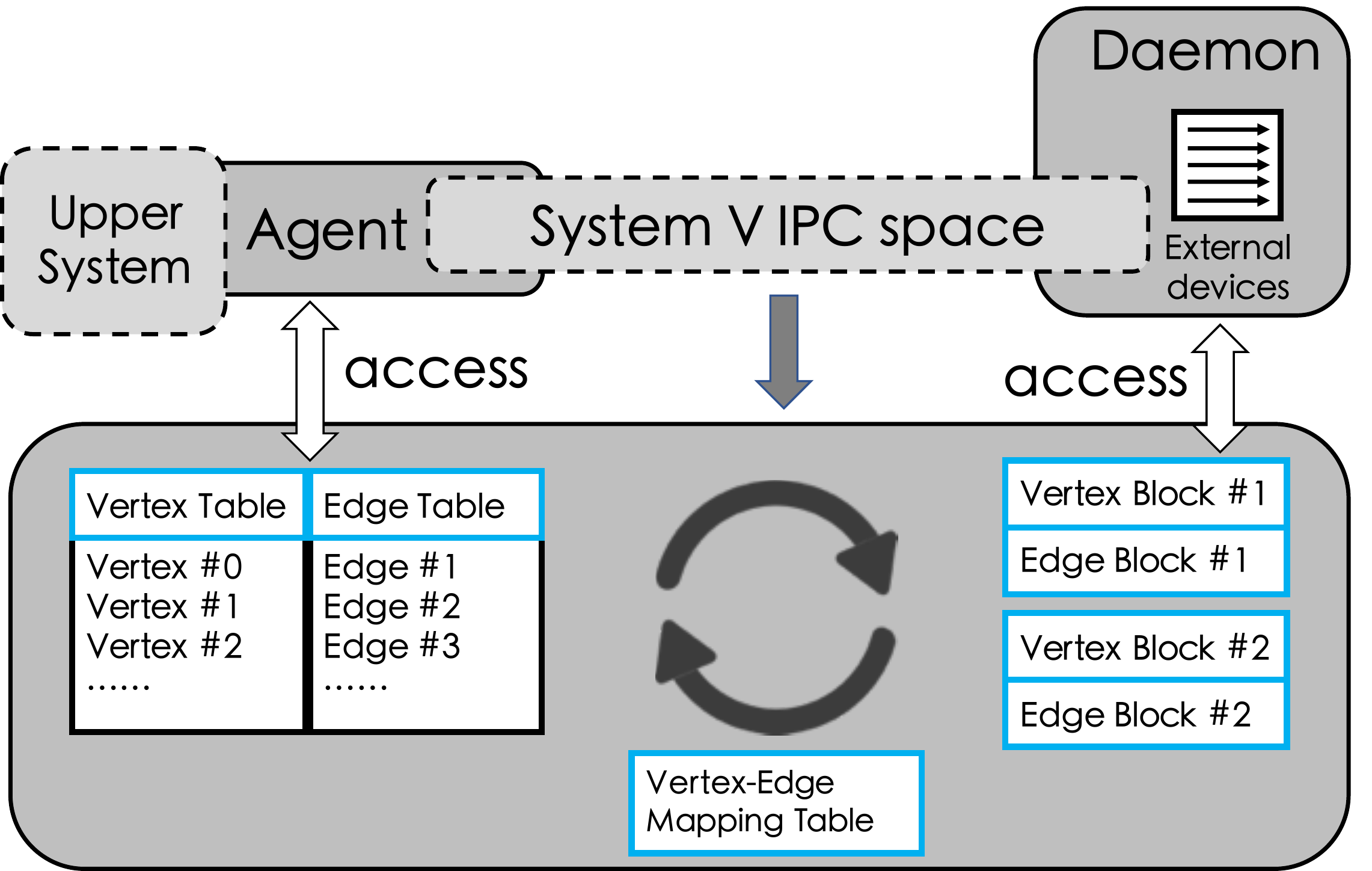}
    \caption{Data Flows \& Management}
    \label{fig:DataStorageOverview}
    \vspace{-5pt}
\end{figure}
}


Initially, the graph data are partitioned to distributed nodes by upper systems.
Then, for each distributed node, the data are fed to daemons for acceleration via agents.
However, data accessed by an agent cannot be directly accessed by a daemon, since they belong to different processes with no common memory space, as discussed in Section~\ref{subsec:RuntimeIsolation}. More, conventional inter-process communication incurs extra data transferring, degrading the system runtime performance.
In our implementation, we use kernel functions aided by the UNIX System V to create a shared memory space for daemons and agents. 
In our middleware, a daemon has a unique System V key pointing to its specific shared memory space, while an agent has multiple keys to communicate with all daemons attached to it.

The benefits are on three aspects.
First, the common shared memory space enables the mutual data accessing between upper systems and accelerators that are of different environments.
Second, data accessing can be done via the common memory space, avoiding the intermediate data copying between the two ends.
Third, any data updates in the agent or daemon end can be immediately perceived by the other end without extra sensing efforts, thus facilitating the control logic of the system.





Based on the common memory space, we proceed to discuss the data management in the middleware.
A agent uses a vertex table and an edge table to manage the graph data of a distributed node. The structure is also general in supporting different (vertex- or edge-centric) storage strategies in upper systems.
For daemons, the edge-centric strategy is adopted, because it is commonly accepted to be more effective in workload balancing, than its counterpart \cite{nature, DBLP:conf/osdi/GonzalezLGBG12}, especially for real graphs following power-law distributions \cite{donato2004large, DBLP:conf/icde/KDY21}.
For efficient processing in accelerators, a daemon uses a series of data blocks, including vertex blocks and edge blocks, to be fed to accelerators.
Each edge block contains a fixed number of edges.
Also, each edge block is associated with a paired vertex block, where both source and destination vertices of an edge can be found.
There is a vertex-edge mapping table, for transforming the data stored in the vertex and edge tables of the agent end to the vertex and edge blocks of the daemon end.
Thus, to construct an edge block, an agent selects a vertex and retrieves its outer edges, with vertex-edge mapping table.
The corresponding vertex block is constituted by incorporating destination vertices, as well as their attributes, for the edges in the edge block.

So, at each iteration of computation, the middleware packages up the vertex and edge blocks for accelerators, by repeatedly selecting vertices or edges that are needed.
After an iteration of computation, updated vertex and edge blocks are synchronized back to the vertex and edge tables.

\subsection{Controllers}
\label{subsec:ctrl}

For the middleware, daemons are in charge of orchestrating different computational components. We introduce two components located in the daemon, making the middleware adaptable to different computation models and optimizing the iterative processing for upper systems.

\subsubsection{Runtime Control}
\label{subsubsec:runctrl}

The runtime control component is for controlling the execution order of implemented APIs, including the runtime information collected from accelerators and sending/receiving flags for iteration controlling.
By controlling the execution order of implemented APIs, the middleware can easily be integrated into different computation models, as discussed in Section~\ref{subsubsec:powergraph}.

\subsubsection{Iteration Control}
\label{subsubsec:iterctrl}

The iteration control component is for controlling and coordinating the entire iteration. Since the middleware separates the runtimes of upper systems and the computation, it is necessary to connect both parties for in-between data synchronization and the computation processing cycle.
Several optimizations, such as pipeline shuffling, synchronization caching, and skipping, are implemented in Section~\ref{sec:ro}.
The component collaboratively works with agents for retrieving information from upper systems, and works with the runtime control component for exchanging flag information, to fulfill the iteration controlling.

The main goal of the middleware is to shield the system developers and graph algorithm programmers from the heterogeneity in different systems and accelerators.
It is achieved at the expense of internal overheads in the middleware.
In the sequel, we devise a series of optimizations to alleviate or even eliminate the overhead originating from the middleware.  

\section{Runtime Optimization}

\label{sec:ro}

In this section, we introduce three optimization techniques, pipeline shuffling for improving intra-iteration processing, synchronization caching and skipping for eliminating unnecessary data transferring for inter-iteration processing, and workload balancing for beyond-iteration optimization.

\subsection{Intra-Iteration Optimization: Pipeline Shuffling}

\label{subsec:pipelineShuffle}

\subsubsection{Motivation}

For the basic daemon-agent framework aforementioned, an ordinary workflow of graph processing acceleration consists of five steps, data downloading (from upper systems), agent-to-daemon data transferring, computing, daemon-to-agent data transferring, and data uploading (to upper systems).
However, a tightly coupled execution of the five steps, where the output of one step is streamed as the input of another step, leads to many waiting-and-suspending states and therefore the underutilization of computation resources.

For example, the computing step must wait for agent-to-daemon data transferring to start, so that the computing step would be suspended during other steps.
To alleviate the predicament and to improve the computation resource utilization, we investigate a pipeline parallelism mechanism, \textit{Pipeline shuffle}, to the middleware. 

\subsubsection{Overview}




\begin{figure}[ht]
    \centering
    \vspace{-5pt}
    \includegraphics[width=\linewidth]{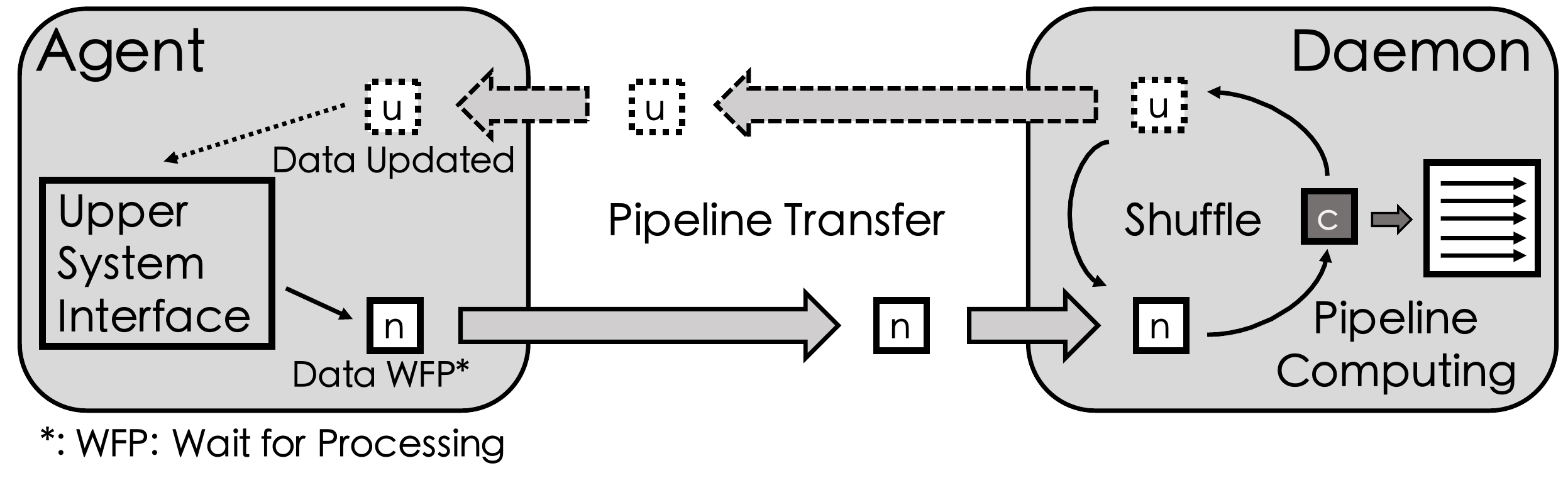}
    \vspace{-15pt}
    \caption{Pipeline Shuffle}
    \label{fig:PipelineShuffleOverview}
    \vspace{-10pt}
\end{figure}

\begin{algorithm}[ht]
    \small
    \caption{Pipeline Shuffle - Daemon Side}
    \label{al:PipelineShuffleDaemon}
    \algorithmicrequire{Computer Device $com\_dev$, Data area pointer $n$, $c$, $u$}
    \begin{algorithmic}[1]
        \While{\textnormal{In Iteration}}
            \State Block\_Recv($agent$, $msg$)\;
            \If{$msg$ = \textnormal{``ExchangeFinished''}}
                \State Rotate($n$ $\rightarrow$ $c$ $\rightarrow$ $u$ $\rightarrow$ $n$)\;
                \State Send($agent$, ``RotateFinished'')\;
            \ElsIf{$c$ \textnormal{contains contents to compute}}
                \State $com\_dev$.Load($*c$)\;
                \State $com\_dev$.Compute()\;
                \State $*c$ $\leftarrow$ $com\_dev$.data\;
                \State Send($agent$, ``ComputeFinished'')\;
            \Else
                \State Send($agent$, ``ComputeAllFinished'')\;
                \State End\_Iteration()\;
            \EndIf
        \EndWhile
    \end{algorithmic}
\end{algorithm}

\begin{algorithm}[ht]
    \small
    \caption{Pipeline Shuffle - Agent Side}
    \label{al:PipelineShuffleAgent}
    \algorithmicrequire{Upper system interface $USI$, Data area pointer $n$, $c$, $u$}
    \begin{algorithmic}[1]
        \State $*n$ $\leftarrow$ $USI$.Download()\;
        \State Send($daemon$, ``ExchangeFinished'')\;
        \While{\textnormal{In Iteration}}
            \State Block\_Recv($daemon$, $msg$)\;
            \If{$msg$ = \textnormal{``RotateFinished''}}
                \State $upload$, $download$ = new Thread()\;
                \For{\textnormal{Thread} upload}
                    \State $USI$.Upload($*u$)\;
                \EndFor
                \For{\textnormal{Thread} download}
                    \State $*n$ $\leftarrow$ $USI$.Download()\;
                \EndFor
            \ElsIf{$msg$ = \textnormal{``ComputeFinished''}}
                \If{$upload$\textnormal{.isTerminated()}}
                    \If{$download$\textnormal{.isTerminated()}}
                        \State Send($daemon$, ``ExchangeFinished'')\;
                    \EndIf
                \EndIf
            \ElsIf{$msg$ = \textnormal{``ComputeAllFinished''}}
                \If{$upload$\textnormal{.isTerminated()}}
                    \If{$download$\textnormal{.isTerminated()}}
                        \State End\_Iteration()\;
                    \EndIf
                \EndIf
            \EndIf
        \EndWhile
    \end{algorithmic}
\end{algorithm}

The idea is to construct a multi-layer pipeline for better parallelism.
First, we replace the original $5$-step data transferring into a $3$-step data transferring, data downloading, computing, and data uploading, which are handled by $3$ threads, Thread.Download, Thread.Compute and Thread.Upload, respectively.
Compared to the $5$-step setting, the $3$-step setting eliminates the two steps of agent-to-daemon and daemon-to-agent data transferring.
Second, based on the $3$-step setting, we construct a $3$-layered pipeline to reduce the suspension time of the computing step, and finally improve the accelerator utilization.
The overview of pipeline shuffle is shown in Figure~\ref{fig:PipelineShuffleOverview}.
The detailed process is depicted by Algorithms~\ref{al:PipelineShuffleDaemon} and \ref{al:PipelineShuffleAgent}.

\paragraph{Pipeline Parallelism}




The pipeline consists of $3$ layers in correspondence to the $3$ steps mentioned above, as shown in Figure~\ref{fig:PipelinedProcess}.
With the pipeline, an iteration can be decomposed into a sequence of $3$ processing cycles, corresponding to the $3$ layers.
For all three pipeline layers,
we use ``edge triplets'' as the intermediate data structure, which includes an edge and its source and destination vertices, by efficiently joining the edge and vertex tables.
With the data structure of edge triplets, the pipeline has homogenous data structures for all layers, for avoiding unnecessary data format transformation and enabling grained-granularity data retrieval.
Essentially, the triplet is the basic processing unit of an iteration,
which serves as both the source of computation input and the carrier of output.
Within an iteration, there is no data dependencies between triplets.

For each layer, triplets are grouped into a set of blocks, as shown in Figure~\ref{fig:PipelinedProcess}. The blocks are assigned to the $3$ threads for processing. Thus, pipeline parallelism can be established which significantly improves the system performance.

\begin{figure}[ht]
    \centering
    \vspace{-5pt}
    \includegraphics[width=\linewidth]{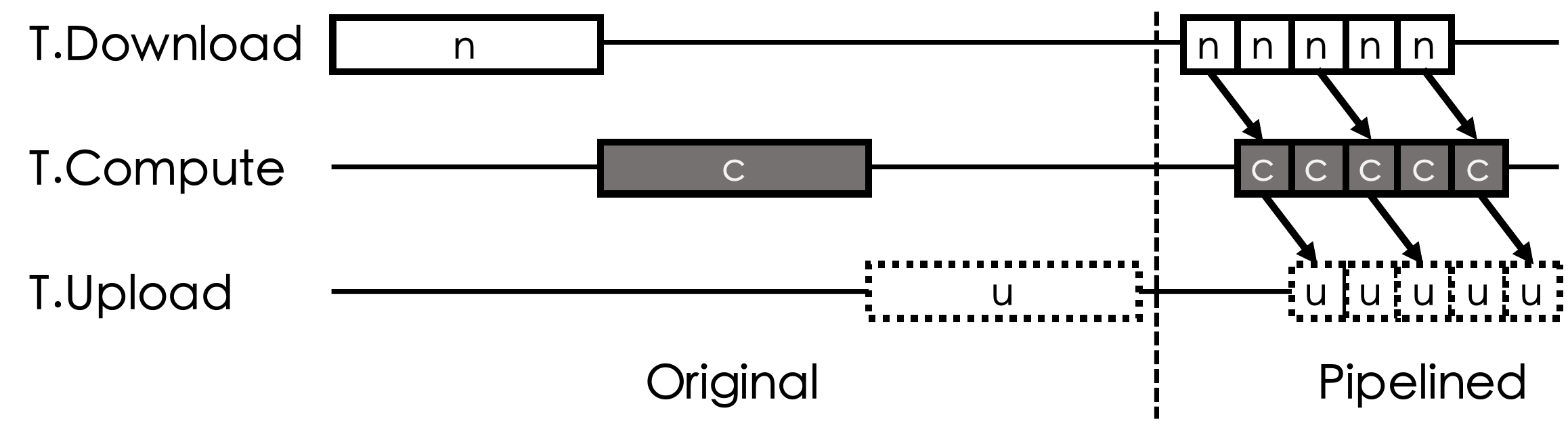}
    \vspace{-15pt}
    \caption{Pipelined Process Flow with Pipeline Shuffle}
    \label{fig:PipelinedProcess}
    \vspace{-5pt}
\end{figure}



\paragraph{Shuffle for data transfer reduction}

There are intermediate data layers existing in an ordinary pipeline.
The $i$-th data layer stores the result of $i$-th pipeline process cycle, and is used for $(i+1)$-th downstream pipeline process cycle.
Within a cycle, data are transferred between threads for fulfilling the processing.
However, frequent data copying incurs considerable system overhead.

To handle that, we design a shuffle mechanism for efficient data transferring in pipeline parallelism. First, we allocate equal sized memory chunks for threads. Each memory chunk is associated with a pointer, for the reference of the front block currently being processed inside the chunk.
Second, inter-thread data copying is replaced by the pointer copying. For example, in Figure~\ref{fig:PipelinedProcess}, data blocks of $3$ layers can be represented by $n$-block, $c$-block, and $u$-block, indicating blocks for new data retrieved from upper systems, blocks for computing, and blocks for uploading to upper systems, respectively. When a pipeline cycle is finished, the $3$ pointers are shuffled in a rotation manner: the pointer to $n$-block switches to $c$-block, the pointer to $c$-block switches to $u$-block, and so on.
With the pipeline shuffle mechanism, there is no more data copying between threads, since it is completed in situ.


\paragraph{Block Size Selection}

In our work, we found the size of a block has a profound effect over the parallelism performance.
We assume that there is a sub-dataset distributed to a agent-daemon pair for processing which contains $d$ entities need to be processed in the current iteration. Also, agent divides the dataset into $s$ blocks evenly, $b = \frac{d}{s}$. Let $T_{n}(b)$, $T_{c}(b)$, $T_{u}(b)$ be the process time of one data block in Thread.Download, Thread.Compute and Thread.Upload, respectively.
$T_n$ and $T_u$ are proportional to the data block size.
We can estimate pipeline processing time $T_{total}$.
\begin{equation}
 \small   \begin{aligned}
        T_{total} = &
        T_{n}(b) + \max(T_{n}(b), T_{c}(b))\\
        + & (s - 2)\max(T_{n}(b), T_{c}(b), T_{u}(b))\\
        + & \max(T_{c}(b), T_{u}(b)) + T_{u}(b)
    \end{aligned}
    \label{eql:PipelineTimeSimplied}
\end{equation}

$T_{c}$ refers to the time cost of Thread.Compute, and consists of calling computation devices, loading data to devices, and computation. Their corresponding time costs are represented by $T_{call}$, $T_{comp}$, and $T_{copy}$, respectively.
The operation of calling devices takes constant time, while computation and data copying time are related to data size. So, $T_c(b)$ can be modeled as follows.
\begin{equation}
\small
    \begin{aligned}
        \nonumber T_{c}(b) = T_{call} + T_{comp}(b) + T_{copy}(b)
    \end{aligned}
    \label{eql:PipelineTimeComp}
\end{equation}

When $s$ increases, block size $b$ decreases, so do $T_{n}$ and $T_{u}$. $T_{c}$ also decreases when $b$ decreases, but will never be less than $T_{call}$, which means $T_{total}$ starts increasing when $s$ is large enough and keep increasing. On the other hand, $T_{n}$ and $T_{u}$ increase when $s$ is being smaller, since $b$ is being larger. Thus, both the function $T_{total}(b)$ and $T_{total}(s)$ should tend to become a U-turn form.
Thus, $s$ and $b$ should be deliberately configured for achieving fine-tuned system performance. We will try to calculate the value of $b$ in order to provide optimization suggestion to overall system.

The calculation follows two assumptions: 1) The sub-dataset distributed to the agent-daemon pair has $d$ entities, and is divided into $s$ blocks evenly, which have the size $b = \frac{d}{s}$. 2) we assume both $T_{n}$, $T_{u}$ are directly proportional to the block size; and 3) In $T_{c}$ shown in Equation~\ref{eql:PipelineTimeComp}, $T_{call}$ has a fixed number, while $T_{comp}$ and $T_{copy}$ are also directly proportional to $b$. We use $k_{1}$, $k_{2}$ and $k_{3}$ to represent download, computation, and upload time cost per unit of data entity. Thus, Equation~\ref{eql:PipelineTimeSimplied} can be further simplified as follows:
\begin{equation}
\small
    \begin{aligned}
        T_{total} =
        k_{1}b + \max(k_{1}b, a + k_{2}b)\\
        + (s - 2)\max(k_{1}b, a + k_{2}b, k_{3}b)\\
        + \max(a + k_{2}b, k_{3}b) + k_{3}b
    \end{aligned}
    \label{eql:PipelineTimeSimplied2}
\end{equation}

Lemma \ref{lemma:bSizeSelection} shows the derivation of optimal $b$ for the pipeline shuffle mechanism, as follows.

\begin{lemma}
\label{lemma:bSizeSelection}
        If a distributed node stores $d$ data entities, the optimal block size $b_{opt}$ and corresponding $T_{total_{min}}$ can be calculated as follows, where $Q = \sqrt{\frac{ad}{k_{1} + k_{3}}}$.
        \begin{equation}
        \small \nonumber
            b_{opt} = \left\{
            \begin{aligned}
                \frac{a}{k_{1} - k_{2}} & , & \left(
                    \begin{aligned}
                        k_{max} = k_{1} & ,\& \\
                        \frac{a}{k_{1} - k_{2}} < Q & \\
                    \end{aligned}
                \right)\\
                \frac{a}{k_{3} - k_{2}} & , & \left(
                    \begin{aligned}
                        k_{max} = k_{3} & ,\& \\
                        \frac{a}{k_{3} - k_{2}} < Q & \\
                    \end{aligned}
                \right)\\
                \sqrt{\frac{ad}{k_{1} + k_{3}}} & , & otherwise\\
            \end{aligned}
            \right. \text{~~~~~~, and}
            \label{eql:PipelinebFinal}
        \end{equation}
        \begin{equation}
        \small    T_{total_{min}} = \left\{
            \begin{aligned}
                \frac{a(k_{1} + k_{3})}{k_{1} - k_{2}} + k_{1}d & , & \left(
                    \begin{aligned}
                        k_{max} = k_{1} & ,\& \\
                        \frac{a}{k_{1} - k_{2}} < Q & \\
                    \end{aligned}
                \right)\\
                \frac{a(k_{1} + k_{3})}{k_{3} - k_{2}} + k_{3}d & , & \left(
                    \begin{aligned}
                        k_{max} = k_{3} & ,\& \\
                        \frac{a}{k_{3} - k_{2}} < Q & \\
                    \end{aligned}
                \right)\\
                k_{2}d + 2\sqrt{(k_{1} + k_{3})ad} & , & otherwise\\
            \end{aligned}
            \right.
            \label{eql:PipelineTimeFinal}
        \end{equation}
        
\end{lemma}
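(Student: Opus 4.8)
The plan is to regard $T_{total}$ as a function of the single continuous variable $b\in(0,d]$ by substituting $s=d/b$ (deferring the fact that $s$ must be an integer to a closing remark), to partition the domain into the subintervals on which each inner $\max$ in Equation~\ref{eql:PipelineTimeSimplied2} is resolved, and then to minimize the resulting piecewise function. Throughout I assume $k_{1}>k_{2}$ and $k_{3}>k_{2}$; if the per-unit compute cost dominated both transfer costs the compute layer would always be the bottleneck and the problem would degenerate to the single balanced case below. Also, since $T_{total}$ is invariant under the exchange $k_{1}\leftrightarrow k_{3}$ (the five summands merely get permuted), I assume without loss of generality that $k_{1}\ge k_{3}$, so that $k_{max}=k_{1}$, and I recover the $k_{max}=k_{3}$ branch at the very end by this symmetry.

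First I would locate the breakpoints. We have $a+k_{2}b\ge k_{1}b$ exactly when $b\le \frac{a}{k_{1}-k_{2}}$, and $a+k_{2}b\ge k_{3}b$ exactly when $b\le \frac{a}{k_{3}-k_{2}}$; since $k_{1}\ge k_{3}$ these satisfy $\frac{a}{k_{1}-k_{2}}\le \frac{a}{k_{3}-k_{2}}$. This cuts $(0,d]$ into three regimes. On Regime~A, $0<b\le \frac{a}{k_{1}-k_{2}}$, all three inner maxima equal $a+k_{2}b$, and collecting terms gives $T_{total}(b)=(k_{1}+k_{3})b+\frac{ad}{b}+k_{2}d$. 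On Regime~B1, $\frac{a}{k_{1}-k_{2}}<b\le \frac{a}{k_{3}-k_{2}}$, the triple max is $k_{1}b$, the pair $\max(k_{1}b,a+k_{2}b)$ is $k_{1}b$, while $\max(a+k_{2}b,k_{3}b)$ is still $a+k_{2}b$, which yields $T_{total}(b)=k_{1}d+a+(k_{2}+k_{3})b$. On Regime~B2, $b>\frac{a}{k_{3}-k_{2}}$, one gets $T_{total}(b)=k_{1}d+2k_{3}b$. (When $k_{1}=k_{3}$, Regime~B1 is empty.) Matching the one-sided limits at the two breakpoints confirms that this piecewise $T_{total}$ is continuous on $(0,d]$.

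Next I would minimize each piece. On Regime~A the function is strictly convex with derivative $(k_{1}+k_{3})-ad/b^{2}$, so its unconstrained minimizer is $b^{\star}=\sqrt{ad/(k_{1}+k_{3})}=Q$; on Regimes~B1 and B2 the function is affine with positive slope, hence strictly increasing. Therefore the global minimizer is $Q$ when $Q\le \frac{a}{k_{1}-k_{2}}$ (the unconstrained optimum already lies inside Regime~A), and is the boundary point $b_{opt}=\frac{a}{k_{1}-k_{2}}$ when $Q>\frac{a}{k_{1}-k_{2}}$ (then $T_{total}$ is decreasing throughout Regime~A and increasing afterwards). The first alternative is precisely the ``otherwise'' case $\frac{a}{k_{max}-k_{2}}\ge Q$; plugging $b=Q$ into the Regime~A expression and collapsing the two square-root terms gives $T_{total_{min}}=k_{2}d+2\sqrt{(k_{1}+k_{3})ad}$. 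The second alternative is the stated $k_{max}=k_{1}$ case; evaluating the Regime~A (equivalently Regime~B1) expression at $b=\frac{a}{k_{1}-k_{2}}$ gives $T_{total_{min}}=k_{1}d+\frac{a(k_{1}+k_{3})}{k_{1}-k_{2}}$. Finally the $k_{1}\leftrightarrow k_{3}$ symmetry supplies the $k_{max}=k_{3}$ case, which assembles the stated three-way formulas.

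The derivative computation itself is harmless; the main obstacle is the careful casework — correctly identifying which argument attains each of the three inner maxima on each subinterval, verifying continuity across the breakpoints so that no minimizer can hide at a non-smooth junction other than the one we single out, and handling the degenerate configurations ($k_{1}=k_{3}$ and the excluded $k_{i}\le k_{2}$). A secondary point worth a remark is that $s=d/b$ must be an integer at least $3$ for Equation~\ref{eql:PipelineTimeSimplied} to be meaningful, so in practice one rounds $s$ to a neighbouring integer and the continuous optimum $b_{opt}$ serves as a design guideline rather than an exactly attainable value.
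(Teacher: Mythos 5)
Your proposal is correct and follows essentially the same route as the paper: both arguments partition the domain of $b$ according to which of $k_{1}b$, $a+k_{2}b$, $k_{3}b$ attains the inner maxima, observe that the pieces where a transfer term dominates are increasing in $b$ while the compute-dominated piece is convex with unconstrained minimizer $Q$, and take the minimum at $Q$ or at the relevant breakpoint $\frac{a}{k_{max}-k_{2}}$ accordingly; your reorganization as a continuous piecewise function of $b$ with the $k_{1}\leftrightarrow k_{3}$ symmetry is a cleaner packaging of the paper's Cases 1--3. One small caveat: your blanket assumption $k_{1}>k_{2}$ \emph{and} $k_{3}>k_{2}$ is justified only against the case where $k_{2}$ dominates both transfer costs, so the configuration where $k_{2}$ lies strictly between $k_{3}$ and $k_{1}$ (which the paper treats as an explicit subcase) is formally excluded without being dispatched; it is harmless here because then the breakpoint $\frac{a}{k_{3}-k_{2}}$ is negative, Regime~B2 is empty, and your Regime~A/B1 analysis goes through verbatim, but that one sentence should be added.
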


\begin{proof}
    Following Equation~\ref{eql:PipelineTimeSimplied2}, we have $3$ cases to consider.

    \underline{\textbf{Case 1: $T_{n} = k_{1}b$ is the maximum value.}}
    This case is true only if 
    $k_{1}$ is the maximum of the $3$ parameters, $k_1$, $k_2$, and $k_3$.
    Accordingly, $b$ should satisfy the follows.

    \begin{equation}
    \small    \begin{aligned}
            \nonumber k_{1}b \geq a + k_{2}b \Rightarrow
            \nonumber b \geq \frac{a}{k_{1} - k_{2}}
        \end{aligned}
        \label{eql:PipelineTimek1Max}
    \end{equation}

    Thus, Equation~\ref{eql:PipelineTimeSimplied2} can be transformed into:

    \begin{equation}
    \small    \begin{aligned}
            \nonumber T_{total} =
            \nonumber sk_{1}b + \max(a + k_{2}b, k_{3}b) + k_{3}b\\
            \nonumber = k_{1}d + \max(a + k_{2}b, k_{3}b) + k_{3}b\\
        \end{aligned}
    \end{equation}

    Notice that $a$ and $\{k_{i}\}_{i \leq 3}$ are all positive, and both $\max(a + k_{2}b, k_{3}b)$ and $k_{3}b$ increase when $b$ increases. Thus, when $b = \frac{a}{k_{1} - k_{2}}$, we have the minimum value of $T_{total}$ as follows.

    \begin{equation}
     \small   \begin{aligned}
            \nonumber
            T_{total} = k_{1}d + \max(a + \frac{ak_{2}}{k_{1} - k_{2}}, \frac{ak_{3}}{k_{1} - k_{2}}) + \frac{ak_{3}}{k_{1} - k_{2}}\\
            = k_{1}d + \max(\frac{ak_{1}}{k_{1} - k_{2}}, \frac{ak_{3}}{k_{1} - k_{2}}) + \frac{ak_{3}}{k_{1} - k_{2}} \\
            = k_{1}d + \frac{(k_{1} + k_{3})a}{k_{1} - k_{2}}\\
        \end{aligned}
        \label{eql:PipelineTimek1MaxFinal}
    \end{equation}

    \underline{\textbf{Case 2: $T_{c} = (a + k_{2}b)$ is the maximum value.}}
    First, we have this equation below, where $s = \frac{d}{b}$.
    \begin{equation}
    \small    \begin{aligned}
            \nonumber
            T_{total} =
            k_{1}b + s(a + k_{2}b) + k_{3}b\\
            = (k_{1} + k_{3})b + k_{2}d + \frac{ad}{b}\\
        \end{aligned}
        \label{eql:PipelineTimek2MaxM}
    \end{equation}

    In this equation, we can have the minimum $T_{total}$, if $b$ equals $\sqrt{\frac{ad}{k_{1} + k_{3}}}$, which is $Q$.
    Notice that $b$ may not equal $Q$, as constrained by $\{k_i\}$.
    Accordingly, we discuss $T_{total}$ in $3$ subcases, based on the value of $k_{2}$.

    \underline{\textit{($k_{2}$ is the minimum one. )}}
    In this situation, both $(k_{1} - k_{2})$ and $(k_{3} - k_{2})$ are positive. Thus, we have:
    \begin{equation}
    \small
        \begin{aligned}
            \nonumber a + k_{2}b \geq \max(k_{1}, k_{3}) \cdot b
            \Rightarrow
            \nonumber b \leq \min(\frac{a}{k_{1} - k_{2}}, \frac{a}{k_{3} - k_{2}})
        \end{aligned}
        \label{eql:PipelineTimek2MaxS1}
    \end{equation}

    Assume that $k_{1} \geq k_{3}$, we have $b \leq \frac{a}{k_{1} - k_{2}}$. Thus, we have the minimum $T_{total}$:
    \begin{equation}
    \small
        \nonumber T_{total} = \left\{
        \begin{aligned}
            k_{2}d + 2\sqrt{(k_{1} + k_{3})ad} & , & \frac{a}{k_{1} - k_{2}} \geq Q\\
            \frac{a(k_{1} + k_{3})}{k_{1} - k_{2}} + k_{2}d + (k_{1} - k_{2})d& , & \frac{a}{k_{1} - k_{2}} < Q\\
        \end{aligned}
        \right.
        \label{eql:PipelineTimek2MaxS1Final}
    \end{equation}
    
    Also, we have the minimum $T_{total}$ when $k_{3} \geq k_{1}$:
    \begin{equation}
    \small
        \nonumber T_{total} = \left\{
        \begin{aligned}
            k_{2}d + 2\sqrt{(k_{1} + k_{3})ad} & , & \frac{a}{k_{3} - k_{2}} \geq Q\\
            \frac{a(k_{1} + k_{3})}{k_{3} - k_{2}} + k_{2}d + (k_{3} - k_{2})d& , & \frac{a}{k_{3} - k_{2}} < Q\\
        \end{aligned}
        \right.
        \label{eql:PipelineTimek2MaxS1Final02}
    \end{equation}

    \underline{\textit{($k_{2}$ is the middle one. )}}
    In this situation, we should notice the change of the inequality, because some terms of Equation~\ref{eql:PipelineTimek2MaxS2} can be negative.
    Without loosing generality, we assume $k_{3} \leq k_{2} \leq k_{1}$. In this case, $(k_{1} - k_{2})$ is positive, and $(k_{3} - k_{2})$ is negative. Thus, we can have $b \leq \frac{a}{k_{1} - k_{2}}$, since
    \begin{equation}
     \small
        \begin{aligned}
            \frac{a}{k_{3} - k_{2}} < 0 \leq b \leq \frac{a}{k_{1} - k_{2}} \Rightarrow 0 \leq b \leq \frac{a}{k_{1} - k_{2}}
        \end{aligned}
        \label{eql:PipelineTimek2MaxS2}
    \end{equation}
    
    Then, we have the minimum value of $T_{total}$ as shown in Equation~\ref{eql:PipelineTimek2MaxM}.
    \begin{equation}
    \small
        T_{total} = \left\{
        \begin{aligned}
            \nonumber
            \frac{a(k_{1} + k_{3})}{k_{1} - k_{2}} + k_{2}d + (k_{1} - k_{2})d& , & \frac{a}{k_{1} - k_{2}} < Q\\
            k_{2}d + 2\sqrt{(k_{1} + k_{3})ad} & , & otherwise\\
        \end{aligned}
        \right.
        \label{eql:PipelineTimek2MaxS2Final}
    \end{equation}
    
    On the other hand, if $k_{3} \geq k_{2} \geq k_{1}$ holds, the minimum value of $T_{total}$ is:
    \begin{equation}
    \small 
        T_{total} = \left\{
        \begin{aligned}
            \nonumber
            \frac{a(k_{1} + k_{3})}{k_{3} - k_{2}} + k_{2}d + (k_{3} - k_{2})d& , & \frac{a}{k_{3} - k_{2}} < Q\\
            k_{2}d + 2\sqrt{(k_{1} + k_{3})ad} & , & otherwise\\
        \end{aligned}
        \right.
        \label{eql:PipelineTimek2MaxS2Final02}
    \end{equation}
    
    \underline{\textit{($k_{2}$ is the maximum one. )}}
    In this situation, both $k_{1} - k_{2}$ and $k_{3} - k_{2}$ are negative. Since $b > 0$, $b$ is also greater than $\frac{a}{k_{1} - k_{2}}$ and $\frac{a}{k_{3} - k_{2}}$. Thus, we have the minimum $T_{total}= k_{2}d + 2\sqrt{(k_{1} + k_{3})ad}$, when $b = \sqrt{\frac{ad}{k_{1} + k_{3}}} = Q$.
    
    \underline{\textbf{Case 3: $T_{u} = k_{3}b$ is the maximum value.}}
    This case is true only if $k_3$ is the maximum of the $3$ parameters in $k_1$, $k_2$, and $k_3$.

    Following the discussion in Equation~\ref{eql:PipelineTimek1MaxFinal}, we simply have the conclusion that $T_{total}$ has the minimum value $k_{3}d + \frac{(k_{1} + k_{3})a}{k_{3} - k_{2}}$ when $b = \frac{a}{k_{3} - k_{2}}$.

    \textbf{Discussion.} Following the previous discussion, and the order of $k_{1}$, $k_{2}$ and $k_{3}$, we have $3$ cases to calculate $b_{opt}$.

    Case (i). $k_{1}$ is the maximum one: if $\frac{a}{k_{1} - k_{2}} \geq \sqrt{\frac{ad}{k_{1} + k_{3}}} = Q$, $b = \sqrt{\frac{ad}{k_{1} + k_{3}}} = Q$, and $T_{total}$ have the minimum value $k_{2}d + 2\sqrt{(k_{1} + k_{3})ad}$. Otherwise, $b = \frac{a}{k_{1} - k_{2}}$, and $T_{total}$ have the minimum value $k_{1}d + \frac{(k_{1} + k_{3})a}{k_{1} - k_{2}}$.

    Case (ii). $k_{2}$ is the maximum one: $b = \sqrt{\frac{ad}{k_{1} + k_{3}}} = Q$, and $T_{total}$ have the minimum value $k_{2}d + 2\sqrt{(k_{1} + k_{3})ad}$.

    Case (iii). $k_{3}$ is the maximum one: if $\frac{a}{k_{3} - k_{2}} \geq \sqrt{\frac{ad}{k_{1} + k_{3}}} = Q$, $b = \sqrt{\frac{ad}{k_{1} + k_{3}}} = Q$, and $T_{total}$ have the minimum value $k_{2}d + 2\sqrt{(k_{1} + k_{3})ad}$. Otherwise, $b = \frac{a}{k_{3} - k_{2}}$, and $T_{total}$ have the minimum value $k_{1}d + \frac{(k_{1} + k_{3})a}{k_{3} - k_{2}}$.

    Then, Equations in Lemma~\ref{lemma:bSizeSelection} can be directly derived from the above discussion, by grouping $b$ and $k$.
\end{proof}

Notice that both $s$ and $b$ must be integers. If $b_{opt}$ or $s_{opt} = \frac{d}{b_{opt}}$ is not an integer, we choose $2$ values $\lfloor s_{opt} \rfloor$ and $\lceil s_{opt}\rceil$ for $s$, and $2$ values $\lfloor b_{opt} \rfloor$ and $\lceil b_{opt} \rceil$ for $b$, so that Equation~\ref{eql:PipelineTimeSimplied2} can be used for estimating the minimum $T_{total}$.


\subsection{Inter-Iteration Optimization: Synchronization Caching and Skipping}

\label{subsec:synccaching}

\subsubsection{Motivation}


For a distributed graph system,
there is inevitable data synchronization between iterations,
for ensuring the data correctness in every distributed node.
However, it is often costly to do such synchronizations, since it would trigger considerable data copying between two successive iterations.
Also, in a na\"{i}vely integrated scale-out and -up system, the data copying involves memory accessing from distributed system environments to external computation accelerators, which is costly.
Thus, it is necessary to reduce the load of data synchronization, either for the number of times that synchronization triggers, or for the data volume transferred.
To this end, we introduce techniques of {\it synchronization caching} and {\it synchronization skipping} for inter-iteration optimization.

\subsubsection{Synchronization Caching}

\begin{figure}[ht]
    \centering
    \vspace{-5pt}
    \includegraphics[width=\linewidth]{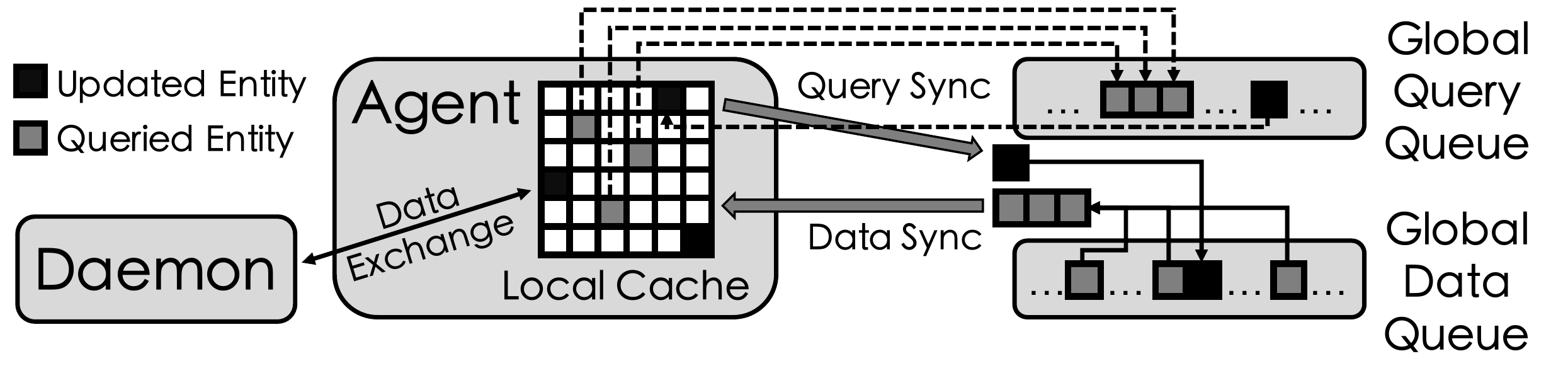}
    \vspace{-15pt}
    \caption{Synchronization Caching}
    \label{fig:SyncCachingOverview}
    \vspace{-5pt}
\end{figure}

Figure~\ref{fig:SyncCachingOverview} shows the main process of synchronization caching.
The idea is to use local cache of agents to reduce unnecessary data transferring between daemons and upper systems. It has two parts, \emph{LRU-based caching} and \emph{lazy uploading}.

\paragraph{LRU-based Caching}

Think twice about the data transferring process.
At the beginning of an iteration, an agent downloads data to be computed from upper systems.
A vertex would have to be repeatedly downloaded from upper systems, if it is involved in the computation iterations, even though its corresponding attributes are never updated.

To save the overhead, the agent can cache a set of vertices in a temporary vertex table, and the cache is organized in a least recently used (LRU {\it in short}) manner.
Initially, when entering the cache, every vertex has a weight, whose value decreases with the passage of iterations, and increases if being used for computation.
When the daemon requires a specific vertex for computation, the agent first checks its local cache for the vertex.
If not found, the agent downloads it from upper systems to cache, and evicts the vertex with the highest weight.
When the agent collects computation results for updating to upper systems, it first checks if corresponding vertices are cached. If so, the agent updates the attributes of the vertex, and upgrades its weight.
Otherwise, the agent chooses vertices with the lowest weights, and replaces them by vertices in the computation result.
If the chosen vertices were updated in previous iterations, corresponding information will be uploaded to upper systems.
The updated vertices in the cache are marked, for lazy uploading, as discussed below.


\paragraph{Lazy Uploading}

Also, there is no need for immediately uploading an updated vertex, until it is involved in the computation of other distributed nodes.
For example, if there are many copies of a vertex generated before the synchronization, only the vertex copy with optimal updated value needs to be uploaded, meaning that other vertex copies are obsolete.
Thus, to prevent unnecessary uploading, we make the strategy of ``lazy uploading''.

So, we design two queues for the lazy uploading, \emph{global query queue} and \emph{global data queue}. After all computation results are updated to the cache, the agent first constructs a list of vertex IDs which are needed by the distributed node for the next iteration.
Then, all agents push their local lists to the upper system. The union of local lists formulates the global query queue and is broadcast to all agents.
Each agent compares its cache with the global query queue, and uploads the required vertices to the global data queue.
This way, data uploading is triggered only if necessary.
Algorithm~\ref{al:SyncCaching} shows the details of lazy uploading.

\begin{algorithm}[ht]
    \small
    \caption{Lazy Uploading}
    \label{al:SyncCaching}
    \algorithmicrequire{Updated Dataset $s$, Global query queue $gqq$, Global data queue $gdq$}
    \begin{algorithmic}[1]
        \State $s_{q}$ $\leftarrow$ $s$.GetQueriedEntity()\;
        \State Send($gqq$, $s_{q}$)\;
        \State Wait for other agents\;
        \State $s_{u}$ $\leftarrow$ Find($gqq$, $s$.GetUpdatedEntity())\;
        \State Send($gdq$, $s_{u}$)\;
        \State Wait for other agents and upper system synchronization\;
        \State $s$.Update(Fetch($gdq$, $s_{q}$))\;
    \end{algorithmic}
\end{algorithm}

\subsubsection{Synchronization Skipping}

\label{subsubsec:syncskip}

Following the characteristics of distributed graph processing,
it happens that some iterations can be skipped, so that the synchronization overheads of these iterations can be saved.
The observation is that, there is no need to trigger the global synchronization, if there is no de facto ``conflicts'' among distributed nodes for being synchronized, i.e., updated data of a node is not needed by all other nodes.
We therefore design a mechanism called ``synchronization skipping'' based on synchronization caching, to detect if the current iteration synchronization can be skipped.
\begin{figure}[ht]
    \centering
    \vspace{-15pt}
    \includegraphics[width=\linewidth]{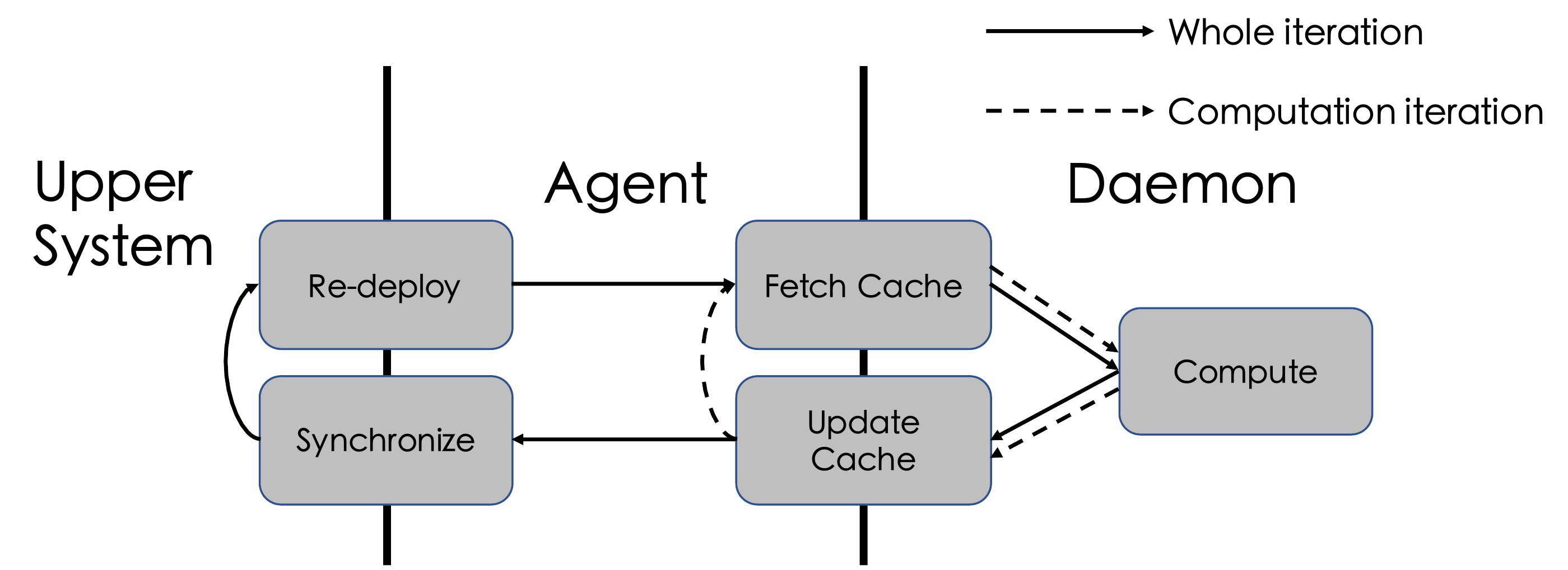}
    \vspace{-15pt}
    \caption{Synchronization Skipping}
    \label{fig:SyncSkippingOverview}
    \vspace{-5pt}
\end{figure} 

As shown in Figure~\ref{fig:SyncSkippingOverview}, at the end of cache updating, an agent checks if each updated vertex and its outer edges are in the same node.
If it is true for an agent, it means the agent can continue with the next iteration using its local data.
If it is true for all agents, it means that there is no need for any inter-node data transferring.
Thus, the upper system process can be skipped and next computation iteration can be directly started.
This way, multiple computation iterations can be equivalent to a logically combined iteration, and therefore unnecessary synchronization for intermediate iterations can be skipped. 

\subsection{Beyond-Iteration Optimization: Workload Balancing}

\label{subsec:WorkloadBalancing}

%
%
%



\subsubsection{Motivation}
As a ``software glue'', the middleware connects different accelerators and different upper systems. For instance, upper systems may adopt various graph partitioning strategies for assigning subgraphs to distributed nodes, which may cause storage imbalance. On the other hand, different accelerators may have different computation powers, which may cause computation imbalance. Therefore, it is important
for the middleware to have a mechanism to detect and react to the workload balancing, so that the performance of the parallelism can be maximized.

\subsubsection{Analysis}

%


To this end, we introduce a simple yet effective workload estimation model for the middleware to predict the performance of data processing of a local node.

Suppose there are in total $D$ data entities which are partitioned into $m$ distributed nodes, satisfying $\sum_{j=1}^m d_j = D$.
According to pipeline shuffle mechanism in Section~\ref{subsec:pipelineShuffle}, the total processing time consists of three parts.
For ease of discussion, for distribute node $j$, we set the total time cost $T^j_{total}$ taken by Thread.Download, Thread.Compute, and Thread.Upload as $T^j_{n}$, $T^j_{c}$, and $T^j_{u}$, respectively, satisfying $T^j_{total} = T^j_n + T^j_c + T^j_u$.
Noticed that both ${T^j_{n}}$ and ${T^j_{u}}$ are  proportional to $d_{j}$. ${T^j_{c}}$'s computation time and data copying time are also proportional to $d_{i}$, and calling time $T^j_{call}$ is proportional to the number of blocks $s$. We can have:
\begin{equation}
\small
    \label{eql:CostModelOneIteration}
    \nonumber
    \begin{aligned}
        T^j_{total} = {T^j_{n}} + {T^j_{c}} + {T^j_{u}} = c_{j}\cdot d_{j} + s \cdot T^j_{call},
    \end{aligned}
\end{equation}
where $c_{j}$ is the coefficient associated with node $j$ to represent the relation between data size and process time.
Since there is no relationship between $s$ and $d_{j}$, there is no need to consider $s$ in this situation.
Then, given a set of $m$ distributed nodes, the objective of workload balancing can be represented by:
\begin{equation}
\small
\label{eqn:minmax}
\min (\max\limits_{j \leq m}(c_{j}\cdot d_{j}))
\end{equation}

Here, we call $\frac{1}{c_{j}}$ as \textit{computation capacity factor}, since $c_{j}$ shows ``time cost per unit amount of data'', and $\frac{1}{c_{j}}$ shows ``data processed per unit time''.

\subsubsection{Mechanism}
With the objective function (Equation~\ref{eqn:minmax}), we come up with two metrics for the middleware to detect and react to the workload imbalance.
The first one offers benchmarks for upper systems to adjust partitioning strategies, given a specific configuration of accelerators for distributed nodes.
The second one offers benchmarks for upper systems to supervise the assignment of accelerators to distributed nodes, under a specific graph partitioning strategy.
In other words, our estimation model can be applied for $2$ cases.
The first case is on the tuning of $\{d_j\}$ with fixed $\{c_j\}$.
The second case is on the tuning of $\{c_j\}$ with fixed $\{d_j\}$.

{\it Case 1: tuning $\{d_j\}$ under fixed $\{c_j\}$.}
Lemma~\ref{lemma:minmaxcidi} makes the theoretical basis for getting optimal values of $d_{j}$.

\begin{lemma}
    \label{lemma:minmaxcidi}
    \textnormal{Given $D$ data entities which are partitioned to $m$ distributed nodes, where each node holds a data fragment $d_j$, satisfying $\sum_{j = 1...m} d_{j} = D$, the balancing target is to minimize function $G(.)$, which represents the maximum time cost of a distributed node.}
    \begin{equation}
    \small
        \nonumber
        \begin{aligned}
            G(d_{1}, ..., d_{m}) = \max_{j = 1...m}(c_{j}d_{j})
        \end{aligned}
    \end{equation}
    \textnormal{It holds that function $G(.)$ achieves its minimum value, iff every element $d_j$ of its $m$-dimensional input variable $\{d_{j}\}_{j\leq m}$ satisfies: }
    \begin{equation}
    \vspace{-3pt}
    \small
        \nonumber
        \begin{aligned}
            d_{j} = \frac{\frac{1}{c_{j}}}{\sum^{m}_{j = 1} \frac{1}{c_{j}}}D
        \end{aligned}
    \end{equation}
\end{lemma}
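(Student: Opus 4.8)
The plan is to prove the optimal allocation by a standard min-max-balancing argument: the maximum $\max_j (c_j d_j)$ over a fixed total budget $\sum_j d_j = D$ is minimized exactly when all the terms $c_j d_j$ are equal. First I would establish the ``if'' direction: suppose $d_j = \frac{1/c_j}{\sum_{i=1}^m 1/c_i} D$ for every $j$. Then $c_j d_j = \frac{D}{\sum_{i=1}^m 1/c_i}$ is a constant $C^\star$ independent of $j$, so $G(d_1,\dots,d_m) = C^\star$. This choice is feasible since the $d_j$ are positive and sum to $D$.

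Next I would prove optimality and uniqueness by contradiction. Suppose some feasible allocation $\{d_j'\}$ (with $\sum_j d_j' = D$) achieves $G(\{d_j'\}) \le C^\star$, i.e. $c_j d_j' \le C^\star$ for all $j$, which means $d_j' \le \frac{C^\star}{c_j} = \frac{1/c_j}{\sum_i 1/c_i} D = d_j$ for every $j$. Summing over $j$ gives $D = \sum_j d_j' \le \sum_j d_j = D$, so equality holds throughout, forcing $d_j' = d_j$ for all $j$. Hence no feasible allocation does strictly better, and the balanced allocation is the unique minimizer; this simultaneously gives the ``only if'' direction. I would phrase this cleanly: any deviation from the balanced point means some $d_j' < d_j$, which by the budget constraint forces some $d_k' > d_k$, and then $c_k d_k' > c_k d_k = C^\star$, so the max strictly increases.

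The argument is almost entirely routine; the only mild subtlety is handling the logical structure of the ``iff'' carefully (the statement asserts $G$ attains its minimum \emph{iff} every $d_j$ takes the prescribed value, so both that the prescribed point is a minimizer and that it is the \emph{only} minimizer need to be shown). I expect the main obstacle to be purely expository rather than mathematical: making sure the contradiction step is stated so that it covers uniqueness in one stroke, and being explicit that the constraint $\sum_j d_j = D$ is what couples the coordinates so that lowering one forces raising another. No calculus or Lagrange multipliers are needed, though I could alternatively present it via Lagrange multipliers on $\max$ smoothed out — but the elementary summation argument above is shorter and fully rigorous, so that is the route I would take.
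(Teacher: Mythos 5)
Your proof is correct and takes essentially the same route as the paper's: evaluate $G$ at the balanced point to get the constant $\frac{D}{\sum_{j=1}^{m} 1/c_j}$, then use the budget constraint $\sum_j d_j = D$ in a summation/contradiction argument to show no feasible allocation does better. Your variant with the non-strict inequality actually handles the uniqueness (``only if'') direction a bit more cleanly than the paper, whose contradiction only rules out allocations with $G$ strictly below the balanced value.
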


\begin{proof}
    First, if every $d_{j}$ meets the condition, we have:
    \begin{small}
    \begin{equation}
        \nonumber
        \begin{aligned}
            G(d_{1}, ..., d_{m}) = \max^m_{j = 1} \{c_{j} \cdot \frac{\frac{1}{c_{j}}}{\sum^{m}_{j = 1} \frac{1}{c_{j}}}D\} = \frac{D}{\sum^{m}_{j = 1} \frac{1}{c_{j}}}
        \end{aligned}
    \end{equation}
    \end{small}
    Second, we prove that for any possible $d_{j}$, we have $G \geq \frac{D}{\sum^{m}_{j = 1} \frac{1}{c_{j}}}$.
    We prove this assertion by contradiction. We first assume it holds that $G = max^m_{j = 1}(c_{j}d_{j}) < \frac{D}{\sum^{m}_{j = 1} \frac{1}{c_{j}}}$.
    Then for every $d_{j}$, we have:
    \begin{equation}
    \small
        \nonumber
        \begin{aligned}
            d_{j} < \frac{\frac{1}{c_{j}}}{\sum^{m}_{j = 1} \frac{1}{c_{j}}}D \Rightarrow D = \sum^m_{j = 1} d_{j} < \frac{\sum^m_{j = 1} \frac{1}{c_{j}}}{\sum^m_{j = 1} \frac{1}{c_{j}}}D = D
        \end{aligned}
    \end{equation}

    Here, contradiction occurs. Thus, function $F$ reaches the minimum value $\frac{D}{\sum^{m}_{j = 1} \frac{1}{c_{j}}}$, if and only if for all $d_{j}$, $d_{j} = \frac{\frac{1}{c_{j}}}{\sum^{m}_{j = 1} \frac{1}{c_{j}}}D$. The lemma is hence proved.
\end{proof}

Lemma~\ref{lemma:minmaxcidi} shows that $\frac{\frac{1}{c_{j}}}{\sum^{m}_{j = 1} \frac{1}{c_{j}}}$ can serve as \textit{balancing factor}s for selecting appropriate partitioning strategies.
For example, given a set of partitioning strategies, the one that achieves minimum $F(.)$ is to be selected.

{\it Case 2: tuning $\{c_j\}$ under fixed $\{d_j\}$.}
It is possible for upper systems to elastically select demanding
number of accelerators (e.g., from a GPU cloud), given that
the graph partitioning results are fixed (graph partitioning is more I/O and computational intensive than other processing phases). If so, the middleware can adjust the \textit{computation capacity factor} 
$1/c^{'}_{j}$
for
balancing the workload, according to Lemma \ref{lemma:minmaxdici}.

\begin{lemma}
    \label{lemma:minmaxdici}
    \textnormal{Given $D$ data entities which are partitioned to $m$ distributed nodes, where each node holds a data fragment $d^{}_j$, satisfying $\sum^{m}_{j = 1} d^{}_{j} = D$, and given the maximum available computation capacity factor $f$ ($f \geq \max_{j = 1...m}\frac{1}{c^{}_{j}}$),
    our target is to minimize function $G'(.)$, which indicates the maximum time cost of a distributed node.}
    \begin{equation}
    \small
        \nonumber
        \begin{aligned}
            G^{'}(\frac{1}{c^{}_{1}}, ..., \frac{1}{c^{}_{m}}) = \max^m_{j = 1}(c^{}_{j}d^{}_{j})
        \end{aligned}
    \end{equation}

    \textnormal{Function $G'(.)$ achieves its minimum value, if every element $\frac{1}{c_j}$ of its $m$-dimensional input variable $\{\frac{1}{c_{j}}\}_{j\leq m}$ satisfies:}
    \begin{equation}
     \small
        \nonumber
        \begin{aligned}
            \frac{1}{c^{}_{j}} = \frac{f \cdot d^{}_{j}}{d_*}
        \end{aligned}
        \textit{,~~~~~~~where $d_* = max_{j\leq m}(d_j)$}
    \end{equation}
\end{lemma}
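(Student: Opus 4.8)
\textbf{Proof plan for Lemma~\ref{lemma:minmaxdici}.}

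The plan is to mirror the structure of the proof of Lemma~\ref{lemma:minmaxcidi}, but now treating $\{1/c_j\}$ as the tunable variables subject to the upper bound $f$. First I would observe that, unlike Case~1 where the $d_j$ were constrained to sum to $D$, here the $1/c_j$ are each individually capped by $f$, and there is no sum constraint; so the optimization is of a different flavour. The key structural insight is that increasing any $1/c_j$ (equivalently, decreasing $c_j$) can only decrease the corresponding term $c_j d_j$ and hence can only help (weakly) in lowering $\max_j c_j d_j$. Therefore at an optimum we may as well push every $1/c_j$ as large as allowed; the real question is how large ``as allowed'' is, which is where the bound $f$ and the fixed data sizes $d_j$ interact.

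The main steps, in order, would be: (1) Substitute the claimed optimum $1/c_j = f d_j / d_*$ into $G'$, and check directly that $c_j d_j = d_*/f$ for every $j$, so $G'$ evaluates to the constant $d_*/f$ — this is the candidate minimum value. One must verify this choice is feasible, i.e.\ that $1/c_j = f d_j/d_* \le f$, which holds precisely because $d_j \le d_* = \max_{j\le m} d_j$. (2) Prove no feasible assignment can do better. Take any feasible $\{1/c_j\}$ with each $1/c_j \le f$. Consider the index $j^*$ achieving $d_{j^*} = d_*$. Then $c_{j^*} d_{j^*} = d_*/(1/c_{j^*}) \ge d_*/f$ since $1/c_{j^*}\le f$. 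Hence $G' = \max_j c_j d_j \ge c_{j^*} d_{j^*} \ge d_*/f$, so the value $d_*/f$ cannot be beaten. (3) Combining (1) and (2), $G'$ attains its minimum $d_*/f$, and the claimed assignment achieves it. If the lemma is to be read as an ``iff'' (the statement says ``if'', so possibly only one direction is required), one would additionally note that any minimizer must set $1/c_{j^*} = f$ for the maximal-$d_j$ index, and that the other coordinates are pinned by requiring $c_j d_j \le d_*/f$, i.e.\ $1/c_j \ge f d_j/d_*$; the stated formula is the minimal such choice, which is the natural one (using the least computation capacity needed).

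The main obstacle I anticipate is not the inequality chain — that is short — but rather pinning down exactly what is being optimized and why the formula $1/c_j = f d_j/d_*$ (rather than simply $1/c_j = f$ for all $j$, which is also feasible and also attains $d_*/f$) is singled out. The resolution is that any $1/c_j \ge f d_j/d_*$ also attains the minimum, so the minimizer is not unique; the lemma presumably intends the \emph{economical} minimizer that uses no more accelerator capacity on node $j$ than necessary to keep its finish time at $d_*/f$. I would state this interpretation explicitly when presenting step~(3), so that the ``if'' in the lemma is justified as: this particular assignment is a minimizer (not: it is the unique one), with the understanding that it is the resource-minimal one among all minimizers. A secondary minor point to handle carefully is the edge case where some $d_j = 0$, which forces $1/c_j = 0$ in the formula (a node with no data needs no capacity) — this is consistent and causes no trouble.
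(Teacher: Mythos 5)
Your proposal is correct and follows essentially the same route as the paper: establish the lower bound $G' \ge d_*/f$ from the capacity cap $1/c_j \le f$ applied at the index of maximal $d_j$, then verify that the assignment $1/c_j = f d_j/d_*$ makes every $c_j d_j$ equal to $d_*/f$ and is therefore a (resource-minimal) minimizer. Your explicit feasibility check and your remark on non-uniqueness of the minimizer are welcome clarifications of points the paper leaves implicit, but they do not constitute a different argument.
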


\begin{proof}
    Let $\frac{1}{c_*}$ be $max_{j\leq m}\frac{1}{c_j}$. Since $\frac{1}{c_*} \leq f$, we have:
    \begin{equation}
     \small   \nonumber
        \begin{aligned}
            \frac{d^{}_*}{f} \leq c^{}_*d^{}_* \leq F^{'} = \max^m_{j = 1}(c^{}_{j}d^{}_{j})
        \end{aligned}
    \end{equation}

    To make $G^{'} = \frac{d^{}_*}{f}$, all other $c^{}_{j}d^{}_{j}$ must be not greater than $\frac{d^{}_*}{f}$. Thus, to minimize $\frac{1}{c^{}_{j}}$, we have:
    \begin{equation}
     \small   \nonumber
        \begin{aligned}
            \frac{1}{c^{}_{j}} = min\{\frac{1}{c^{}_{j}}, \textnormal{where }c^{}_{j}d^{}_{j} \leq \frac{d^{}_*}{f}\} = \frac{fd^{}_{j}}{d^{}_*}
        \end{aligned}
    \end{equation}
Thus, the lemma is proved.
\end{proof}

According to Lemma~\ref{lemma:minmaxdici}, the middleware can dynamically allocate idle accelerators to generate more daemons for the node demanding computation powers,
as long as conditions of computation capacity factor of each partition are met.

\section{System Implementation}
\label{sec:si}

We show details on key implementation of the middleware.

\subsection{APIs}

\label{subsec:API}

It is important for the middleware to create a series of easy-to-use interfaces to make accelerators plugged to upper systems easy and coder-friendly.
In our implementation, we design an iteration-based graph algorithm template and a set of operations interfaces to connect with upper systems.



\subsubsection{Algorithm Template}
\label{subsubsec:template}

The APIs of algorithm template follow a unified iterative model and support C++-based code integration, including OpenMP, OpenCL, MPI, and CUDA.
There are $3$ steps with computation of an iteration for general multiworker systems, \textit{Message Passing}, \textit{Combining} and \textit{Aggregating}, in which external computation resources can be utilized for computation optimization.
In correspondence to the above $3$ steps, our algorithm template has $3$ APIs, MSGGen(), MSGMerge() and MSGApply().
MSGGen() function is the computation function for calculating the initial results with vertex and edge blocks and transforming them into initial messages. MSGMerge() function delivers the initial messages to corresponding graph partitions. MSGApply() fetchs message sets for the current partition, and applies them to corresponding vertices and edges. Accordingly, one can design a graph algorithm by implementing the 3 interfaces of the algorithm template.
Examples of implementing graph algorithms can be found in our code repository$^{\ref{source}}$.

With the help of the daemon-agent framework, runtime details, such as data transferring, runtime orders, interactions with upper systems, and extra resource management, are hidden to algorithm engineers
, thus they can focus on the implementation of three APIs for specific graph algorithms.

More, with the separate maintenance of thee functions,
upper system developers can arrange the API calls in different orders, so that the middleware is adaptable to various graph computation models, such BSP, GAS, and asynchronous model, as shown in Section~\ref{subsubsec:powergraph}.



\subsubsection{Operation Interfaces}
\label{subsubsec:interface}

To make upper system calls easier to be adapted to agents, the agent accessing is organized into three functions,
including two functions for data transferring, i.e., transfer() and update(), and function requestX() for computation lifecycle controlling. Here, $X$ can be any of the three APIs, MSGGen(), MSGMerge() or MSGApply().
For an upper system, a call sequence of a computation iteration is: connect() $\rightarrow$ update() $\rightarrow$ \{requestX()\} $\rightarrow$ update() $\rightarrow$ disconnect().
Upper system developers only need to access corresponding functions by inputting proper parameters to get the full control of the daemon runtime. Also, it takes merely a few lines of code for the agent to connect to upper systems.

In summary, with such interface functions of agents, computation daemons can be integrated and cooperated for the global computation invoked by upper systems.



\subsection{Environment Accessing}
\label{subsec:env}

\subsubsection{GraphX (JVM)
}


JVM is a uniform environment separated from the local environment of distributed nodes to execute Java programs.
However, it makes things complicated when GraphX needs external tools or libraries for computation.
JNI provided in JVM suffers from additional costs in invoking native target functions, due to JNI callbacks, which can be eased by the native memory \cite{DBLP:journals/corr/HalliCM14}.
To solve the problem, we design $2$ components to efficiently break the barrier between the JVM runtime and local environment.

{\bf JNI Transmitter.}
We use JNI to achieve reflection between GraphX native method interfaces and external functions connected with agent.
However, naively invoking JVM methods at runtime incurs significant transmission lags.
Hence, we utilize a series of techniques such as POSIX-based shared memory data exchanging, batch data transferring, in JNI transmitter in order to reduce JNI calls.

{\bf Data Packager.}
Data packager solves the inconsistency of data structures of the two ends of JVM and local environment.
It uses techniques of bit data organization and space reserving for data transformation without extra space usage and redundant data copying.
Preliminary testing shows that about $3$ to $10$ times of improvement can be achieved, compared to direct target function invoking.

\subsubsection{PowerGraph
}

\label{subsubsec:powergraph}

PowerGraph\cite{DBLP:conf/osdi/GonzalezLGBG12} follows another computation model, called Gather-Apply-Scatter (GAS {\it in short}), which is widely used by many distributed graph processing frameworks.
Although BSP and GAS are of different graph computation models, they are common in basic iterative characteristics~\cite{DBLP:journals/pvldb/Ozsu2019Keynote}, which can be view as different orders of iterative operations, etc. It paves the road for theoretical and technical foundation of a general middleware design in supporting different computation models.

For example, when connecting to PowerGraph, MSGGen(), MSGMerge() and MSGApply() is used to represent scatter, gather, and apply steps in GAS model.
The execution order in PowerGraph thus follows Merge() $\rightarrow$ Apply() $\rightarrow$ Gen(), which differs from BSP model, i.e., Gen() $\rightarrow$ Merge() $\rightarrow$ Apply().
In an iteration, PowerGraph calls agent interfaces, in the order of requestMerge(), requestApply() and requestGen(), so that GAS model can be supported by the middleware without any extra code modifications. It shows the generality of our middleware in adapting to different computation models.

\subsection{Runtime Isolation}
\label{subsec:RuntimeIsolation}


If an agent is na\"ively designed as a parent process of daemons, the device environment of accelerators would be re-initialized multiple times during the iterative graph processing. Because the launching and ceasing of an agent must be triggered multiple times by the upper system, and so do their associated daemons.
The frequent re-initialization incurs considerable system overheads, since the initialization process of daemons (with internal function calls) and associated computational devices is time-consuming.
To overcome the dilemma, our daemon-agent framework detaches the initialization process from direct function calls.
Daemons and agents work as independent processes, and they communicate with each other by message exchange.
This way, a daemon never triggers re-initialization during the iterative graph processing.

\section{Evaluation}
\label{sec:ev}

\begin{figure*}[ht!]
\centering
\vspace{-5pt}
\begin{subfigure}[t]{0.245\linewidth}
    \centering
    \includegraphics[width=\linewidth]{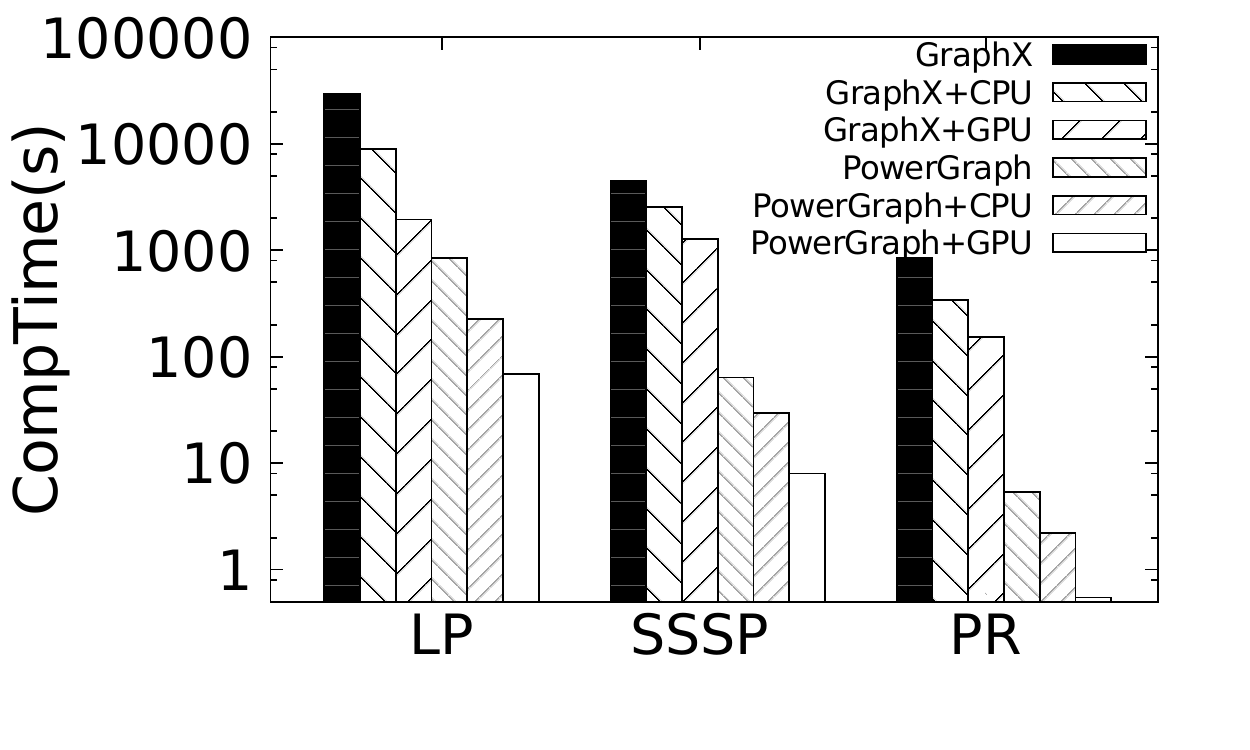}
    \vspace{-25pt}
    \caption{algos @ Twitter2010}
    \label{fig:EXP_GraphXTwitter2010}
\end{subfigure}
\begin{subfigure}[t]{0.245\linewidth}
    \centering
    \includegraphics[width=\linewidth]{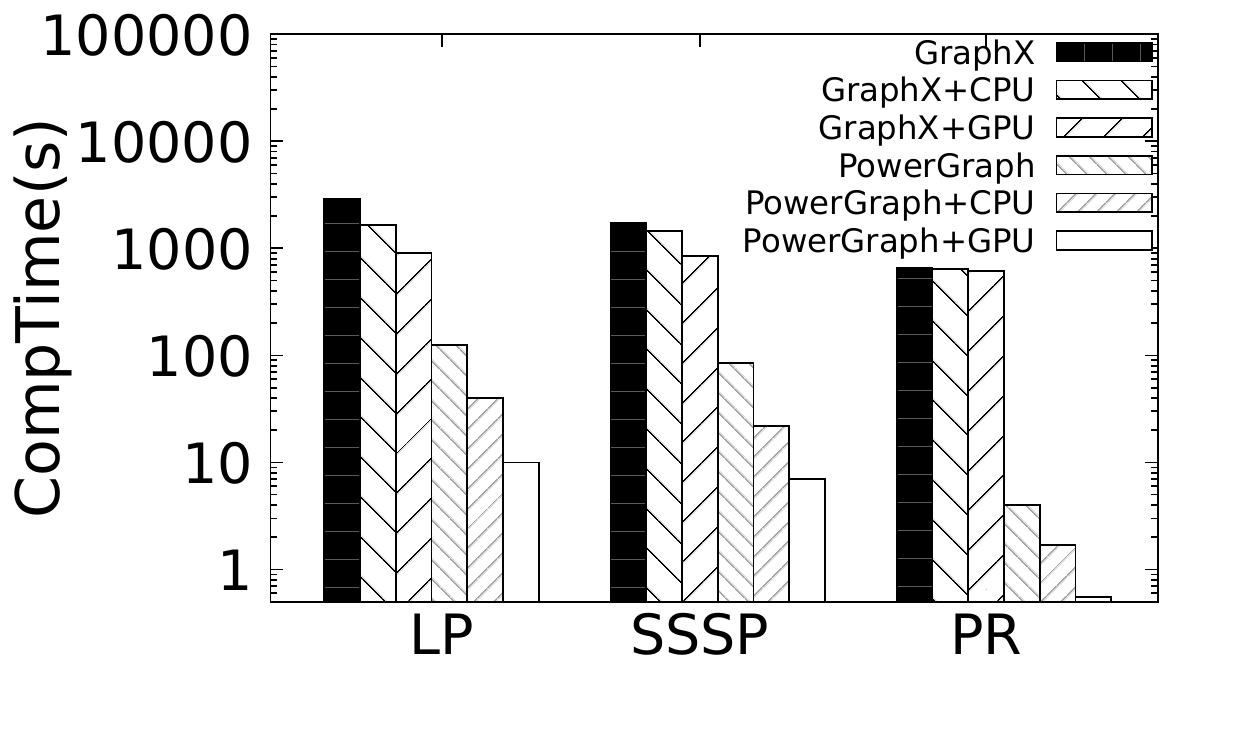}
    \vspace{-25pt}
    \caption{algos @ Orkut}
    \label{fig:EXP_GraphXOrkut}
\end{subfigure}
\begin{subfigure}[t]{0.245\linewidth}
    \centering
    \includegraphics[width=\linewidth]{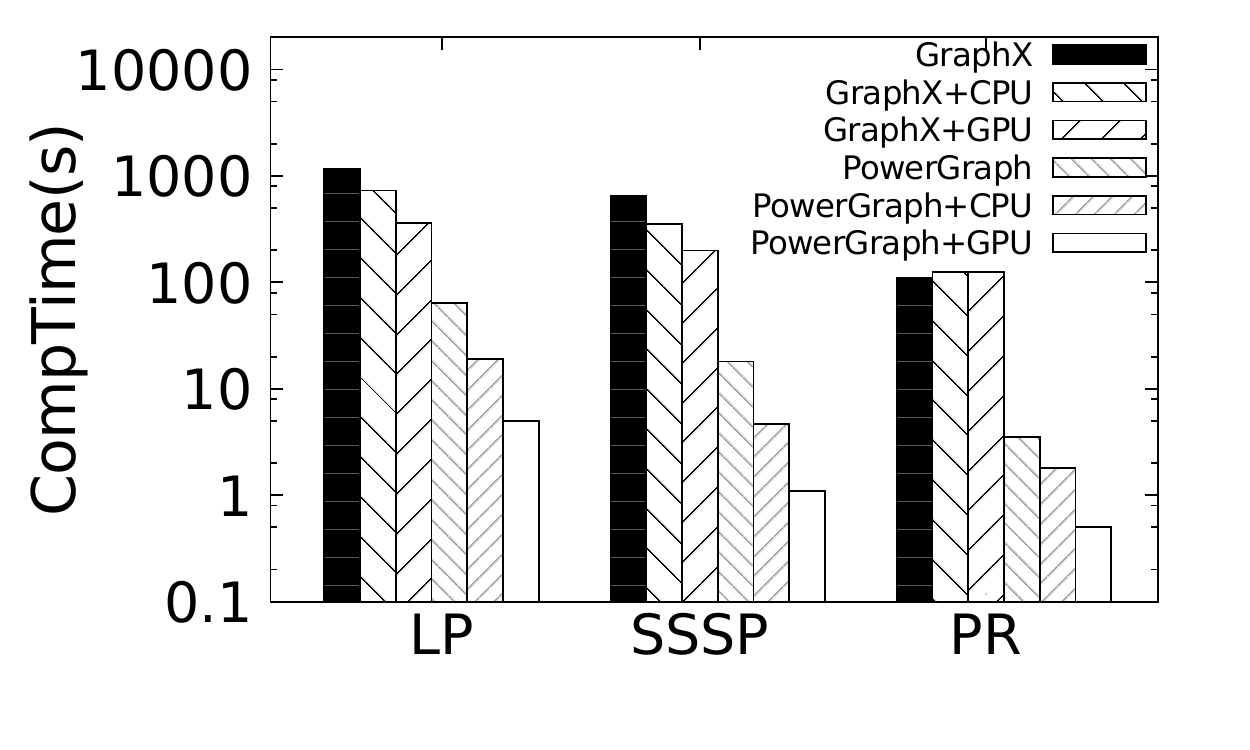}
    \vspace{-25pt}
    \caption{algos @ LiveJournal}
    \label{fig:EXP_GraphXLiveJournal}
\end{subfigure} 
\begin{subfigure}[t]{0.245\linewidth}
    \centering
    \includegraphics[width=\linewidth]{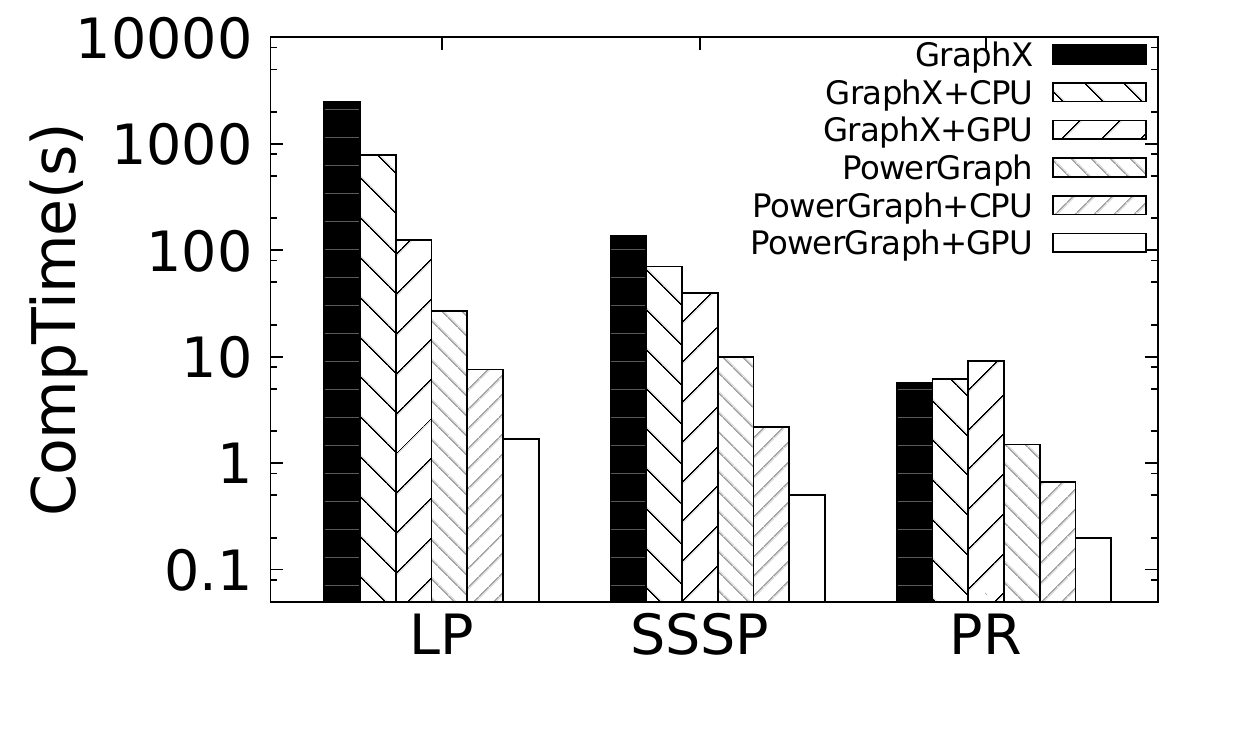}
    \vspace{-25pt}
    \caption{algos @ Wiki-topcats}
    \label{fig:EXP_GraphXWiki-topcats}
\end{subfigure}
\vspace{-5pt}
\caption{Results on Different Datasets}
\label{fig:EXP_BenchmarkReal}

\centering
\begin{subfigure}[t]{0.245\linewidth}
    \centering
    \includegraphics[width=\linewidth]{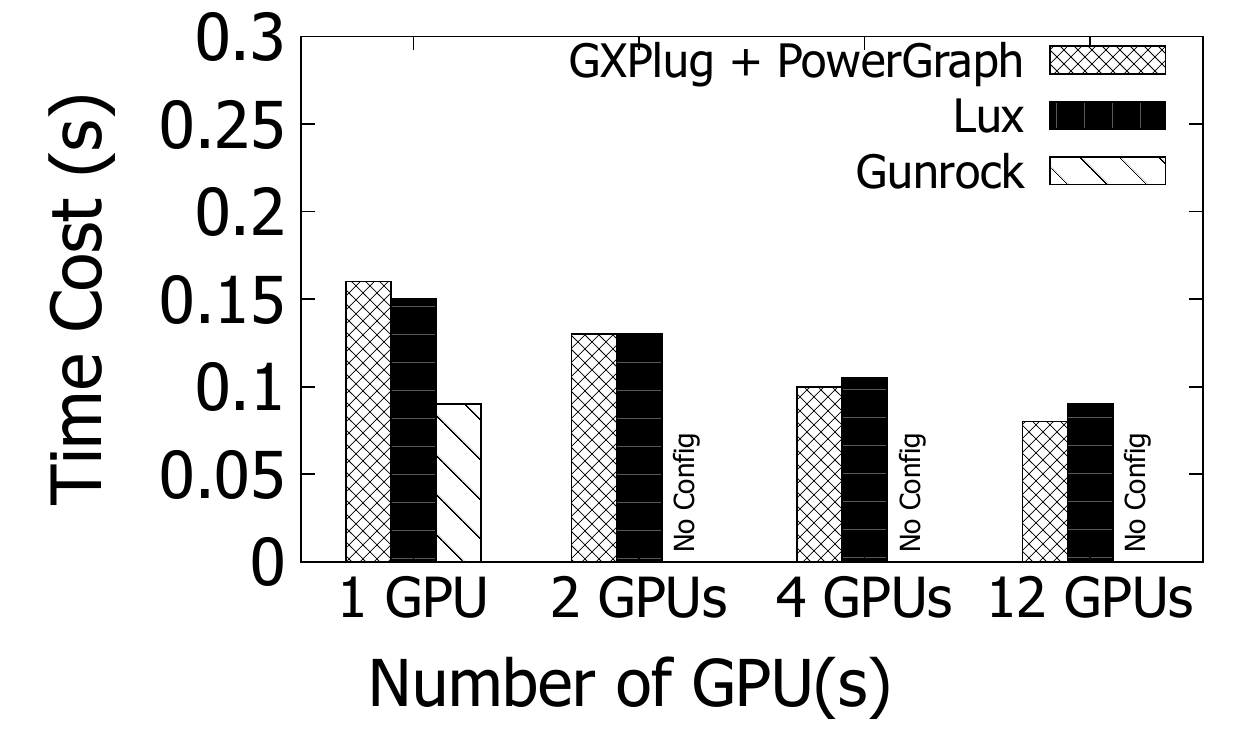}
    \vspace{-10pt}
    \caption{Scalability w.r.t. GPUs. (Orkut)}
    \label{fig:EXP_nodeElasticityOrkut}
\end{subfigure}
\begin{subfigure}[t]{0.245\linewidth}
    \centering
    \includegraphics[width=\linewidth]{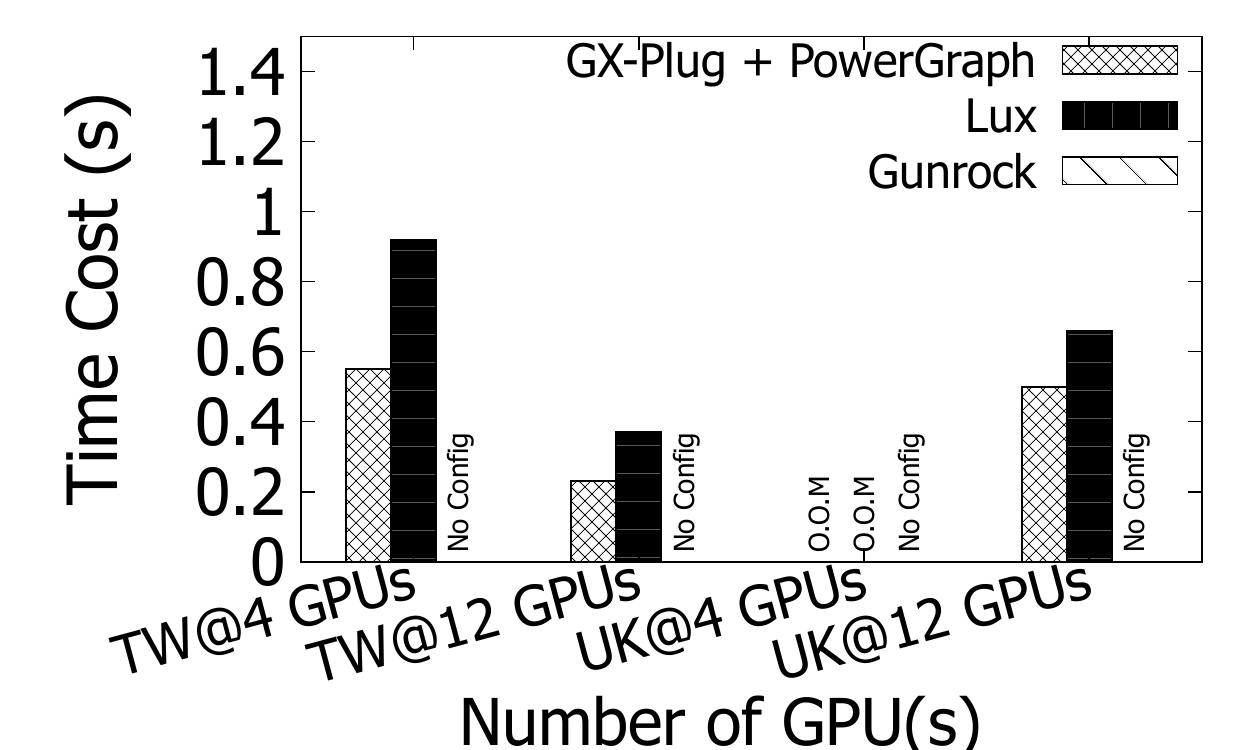}
    \vspace{-10pt}
    \caption{Results on Twitter \& UK-2007} 
    \label{fig:EXP_nodeElasticityLarge}
\end{subfigure}
\begin{subfigure}[t]{0.245\linewidth}
    \centering
    \includegraphics[width=\linewidth]{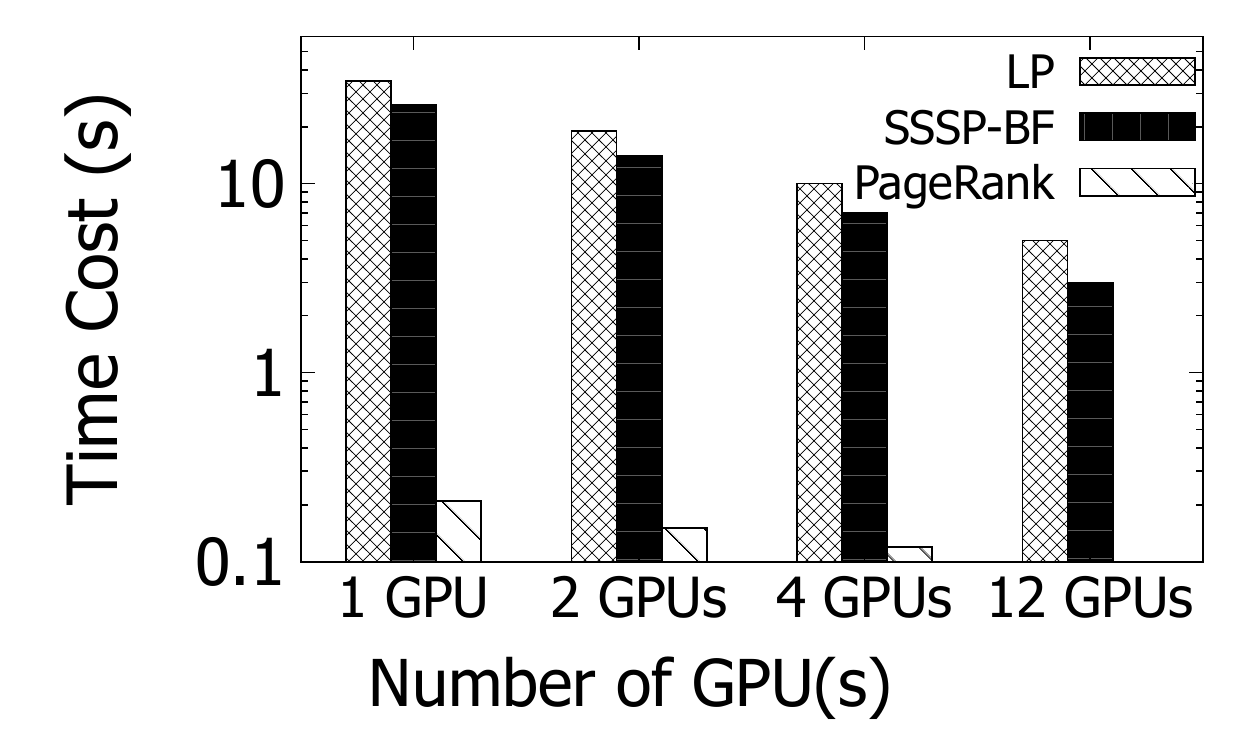}
    \vspace{-10pt}
    \caption{Scalability w.r.t. Algos}
    \label{fig:EXP_nodeElasticityAlgo}
\end{subfigure}
\begin{subfigure}[t]{0.245\linewidth}
    \centering
    \includegraphics[width=\linewidth]{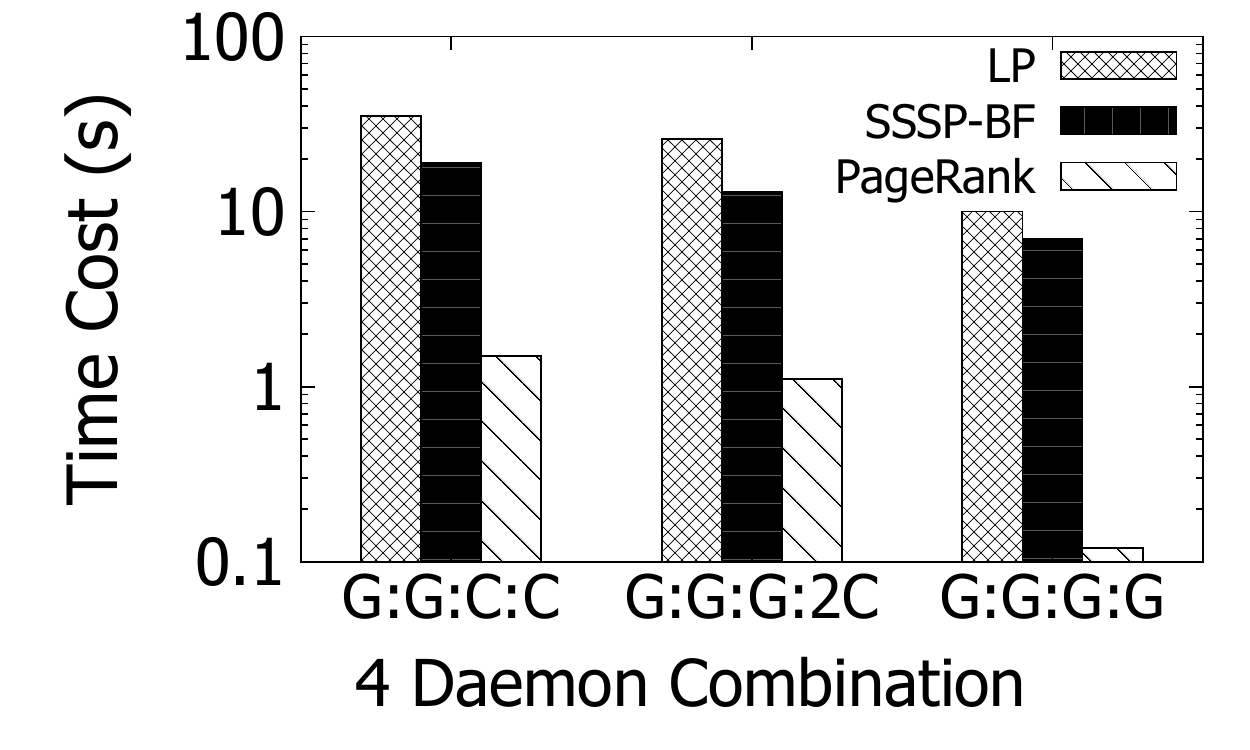}
    \vspace{-10pt}
    \caption{Mix \& Match (CPU and GPU)}
    \label{fig:EXP_nodeElasticityMixedDeployment}
\end{subfigure}
\caption{Results on Scalability}
\label{fig:EXP_BenchmarkCapacityScale}




\centering
\begin{minipage}[t]{0.195\linewidth}
    \centering
    \includegraphics[width=\linewidth]{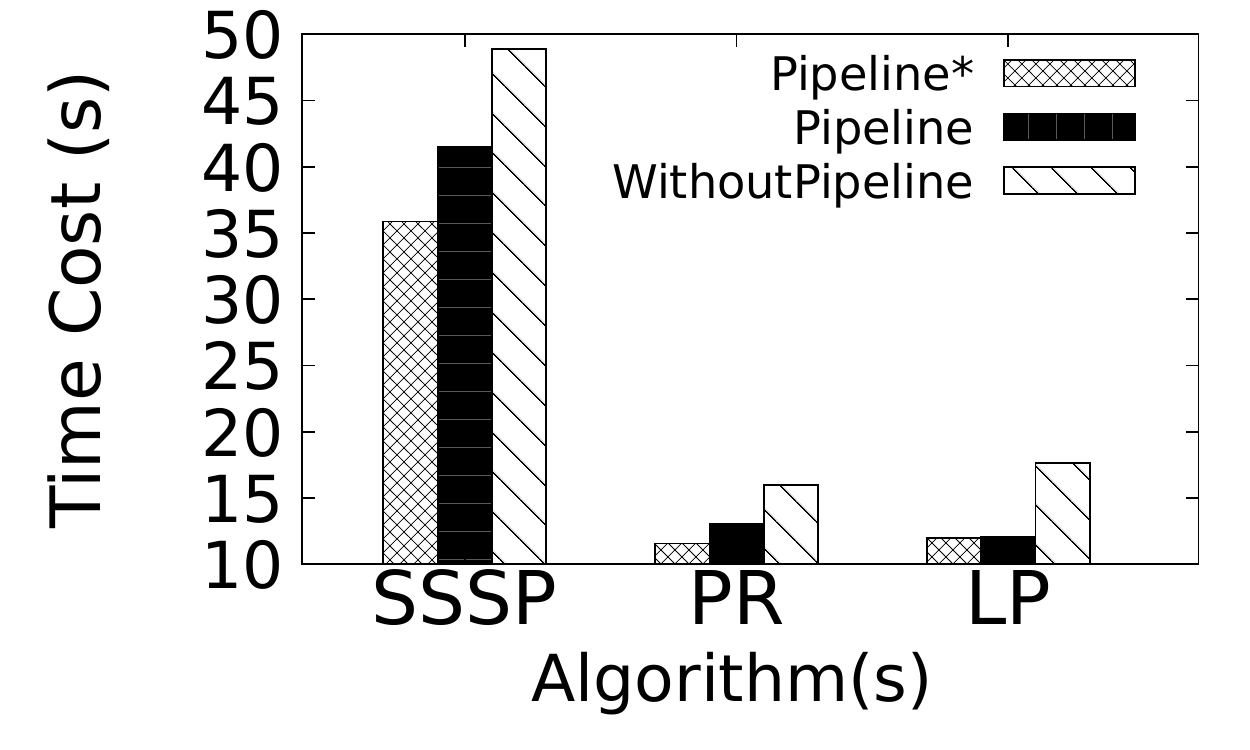}
    \caption{Performance of Pipeline Shuffle}
    \label{fig:EXP_PipelineShuffle}
\end{minipage}
\begin{minipage}[t]{0.39\linewidth}
    \centering
    \begin{subfigure}[t]{0.49\linewidth}
        \centering
        \includegraphics[width=\linewidth]{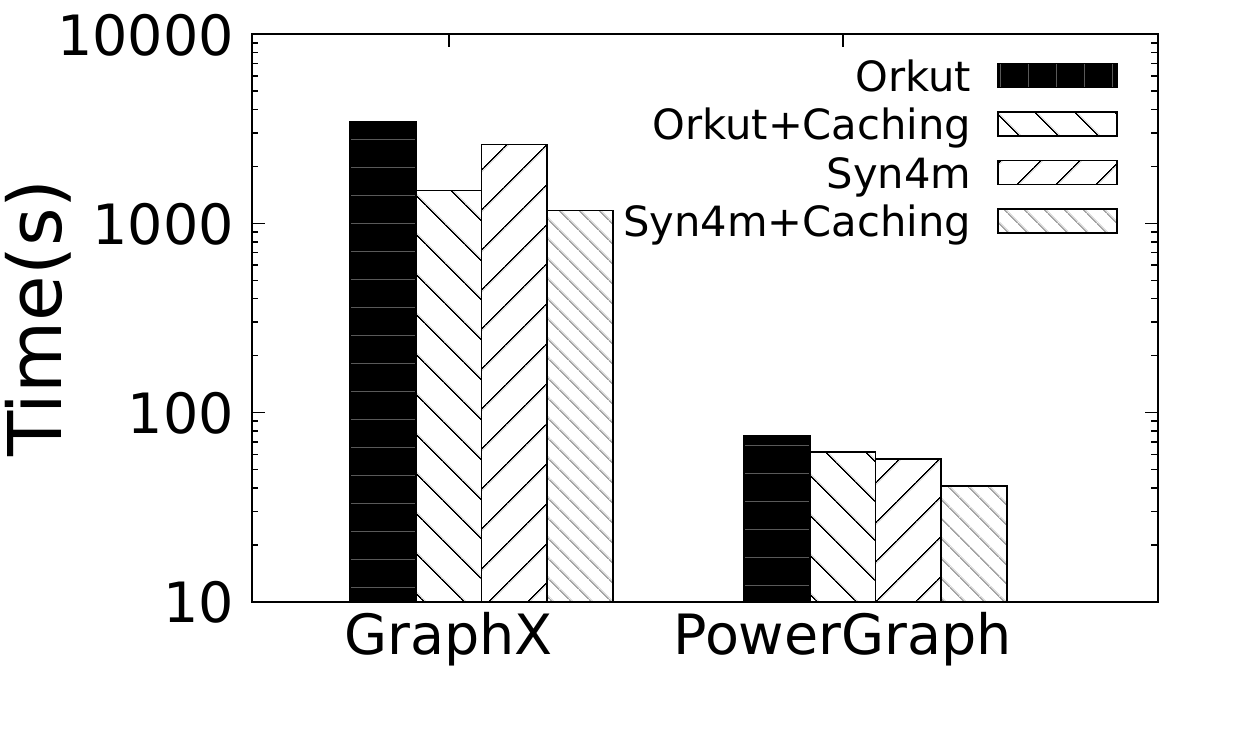}
        \caption{Caching on Systems}
        \label{fig:EXP_SynchronizationCachingPowerGraph}
    \end{subfigure}
    \begin{subfigure}[t]{0.49\linewidth}
        \centering
        \includegraphics[width=\linewidth]{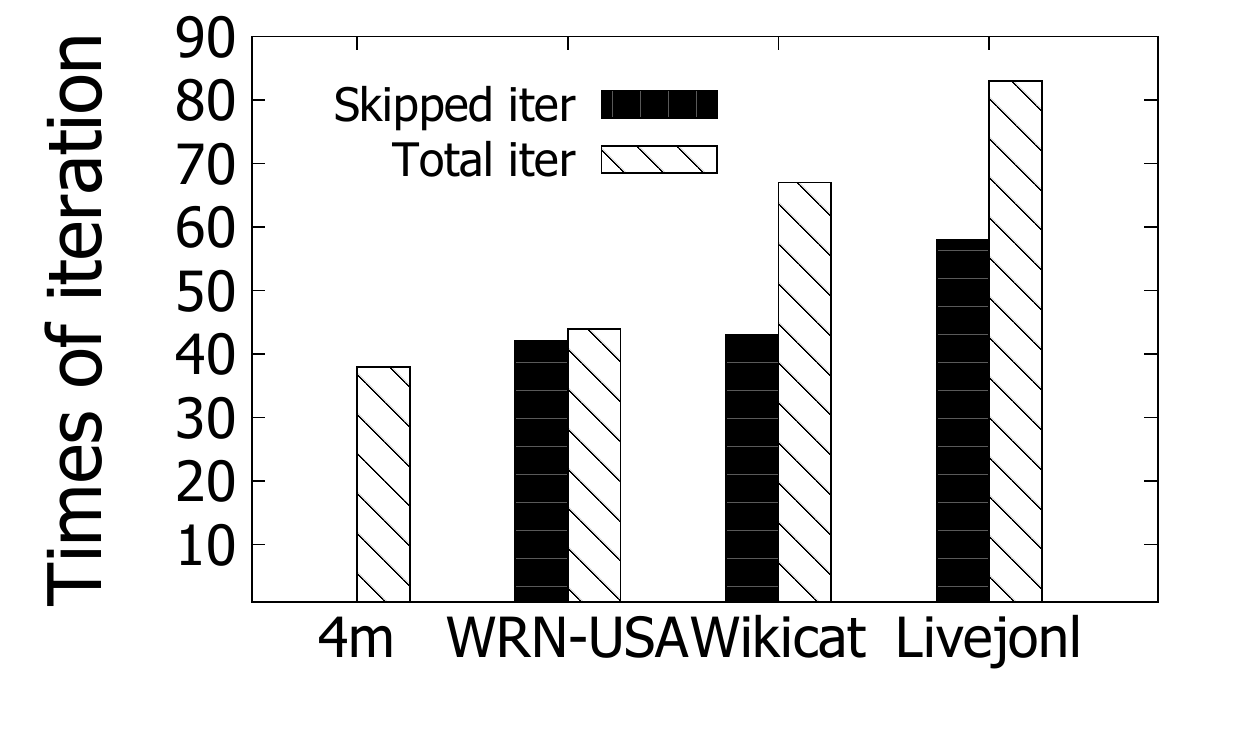}
        \caption{\# of Skipped Iterations}
        \label{fig:EXP_SynchronizationSkippingIters}
    \end{subfigure}
    \caption{Sync Cacheing \& Skipping Performance}
    \label{fig:EXP_SynchronizationCachingSkipping}
\end{minipage} 
\begin{minipage}[t]{0.39\linewidth}
    \centering
    \begin{subfigure}[t]{0.49\linewidth}
        \centering
        \includegraphics[width=\linewidth]{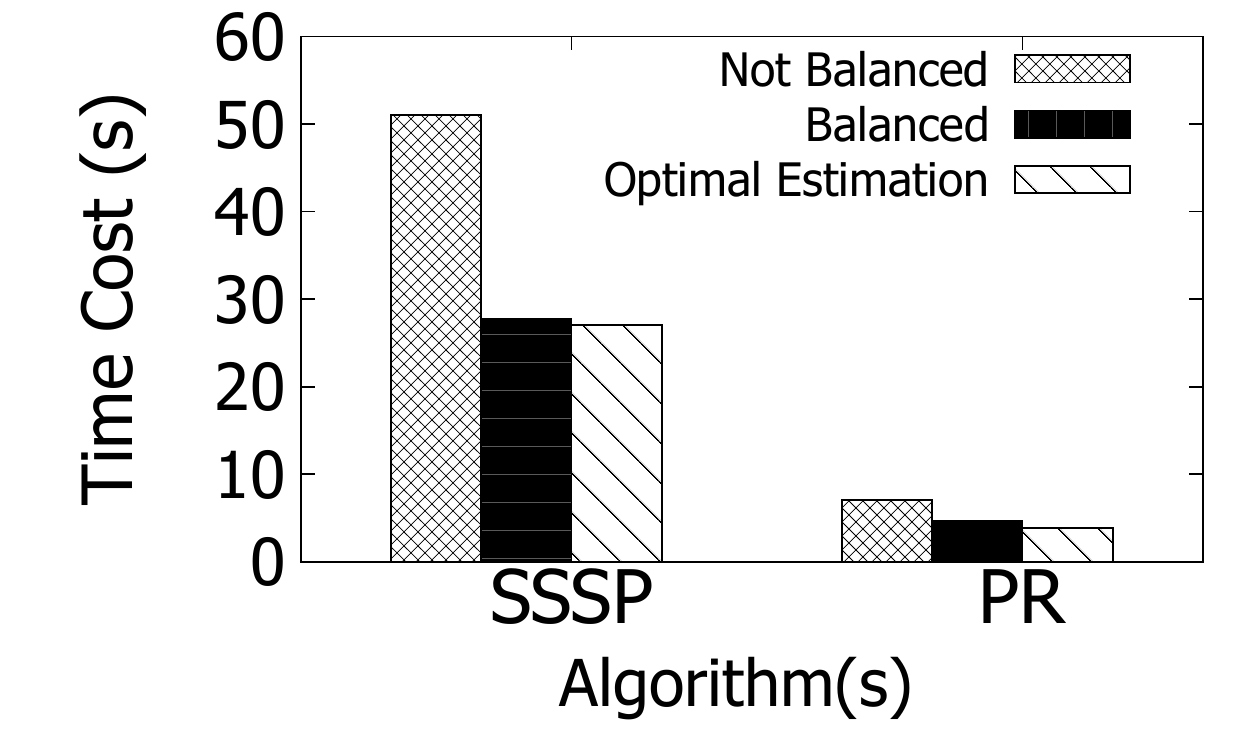}
        \caption{Fixed Comp Resource}
        \label{fig:EXP_ComputationResourceFixed}
    \end{subfigure}
    \begin{subfigure}[t]{0.49\linewidth}
        \centering
        \includegraphics[width=\linewidth]{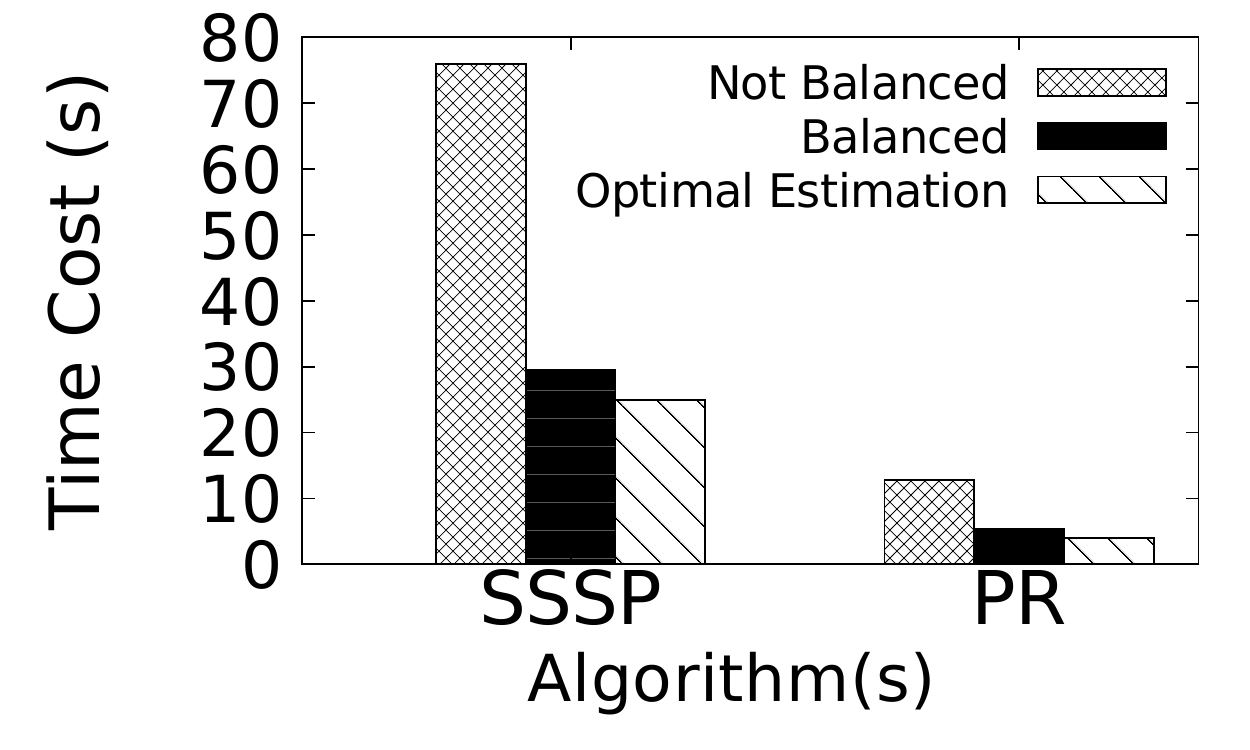}
        \caption{Fixed Data Partitioning}
        \label{fig:EXP_DataSizeFixed}
    \end{subfigure}
    \caption{Workload Balancing Performance}
    \label{fig:EXP_WorkloadBalancing}
\end{minipage}


\begin{minipage}[t]{0.24\linewidth}
    \centering
    \begin{subfigure}[t]{\linewidth}
        \centering
        \includegraphics[width=\linewidth]{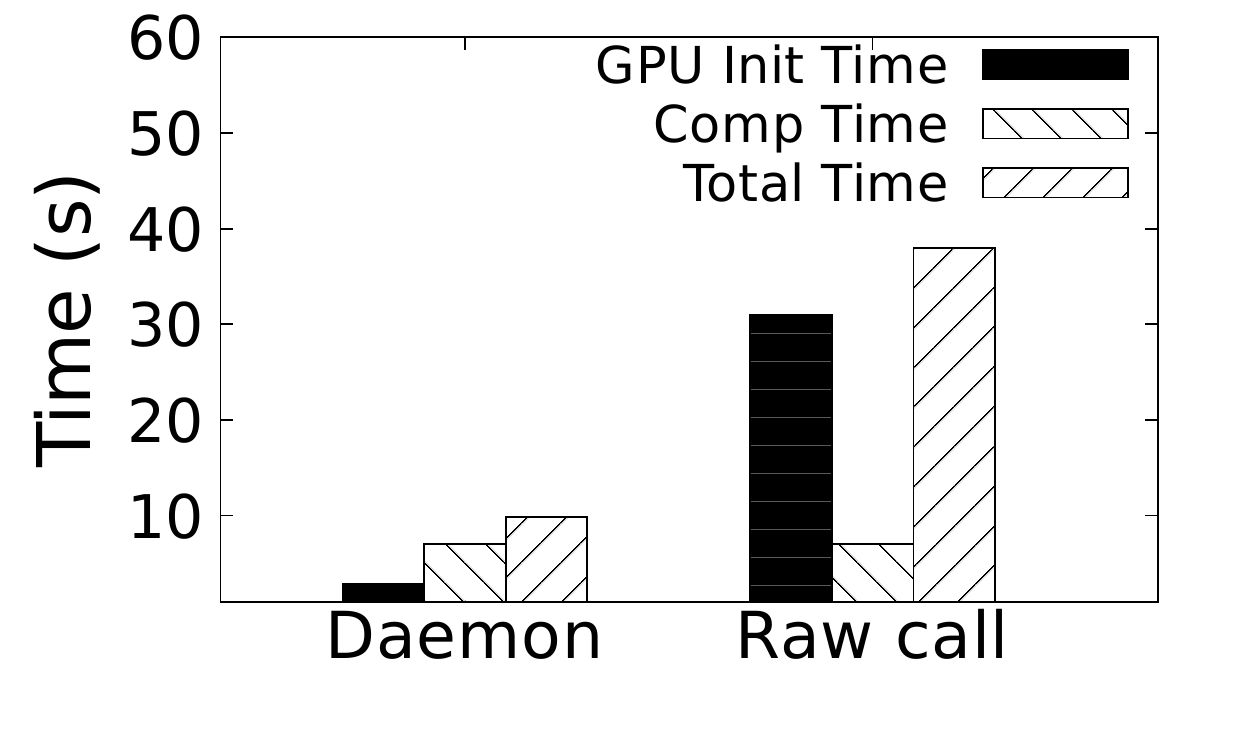}
        \label{fig:EXP_GPUInitSynOrkut}
    \end{subfigure}
    \caption{Runtime Isolation}
    \label{fig:EXP_GPUInit}
\end{minipage}
\begin{minipage}[t]{0.49\linewidth}
    \centering
    \begin{subfigure}[t]{0.49\linewidth}
        \centering
        \includegraphics[width=\linewidth]{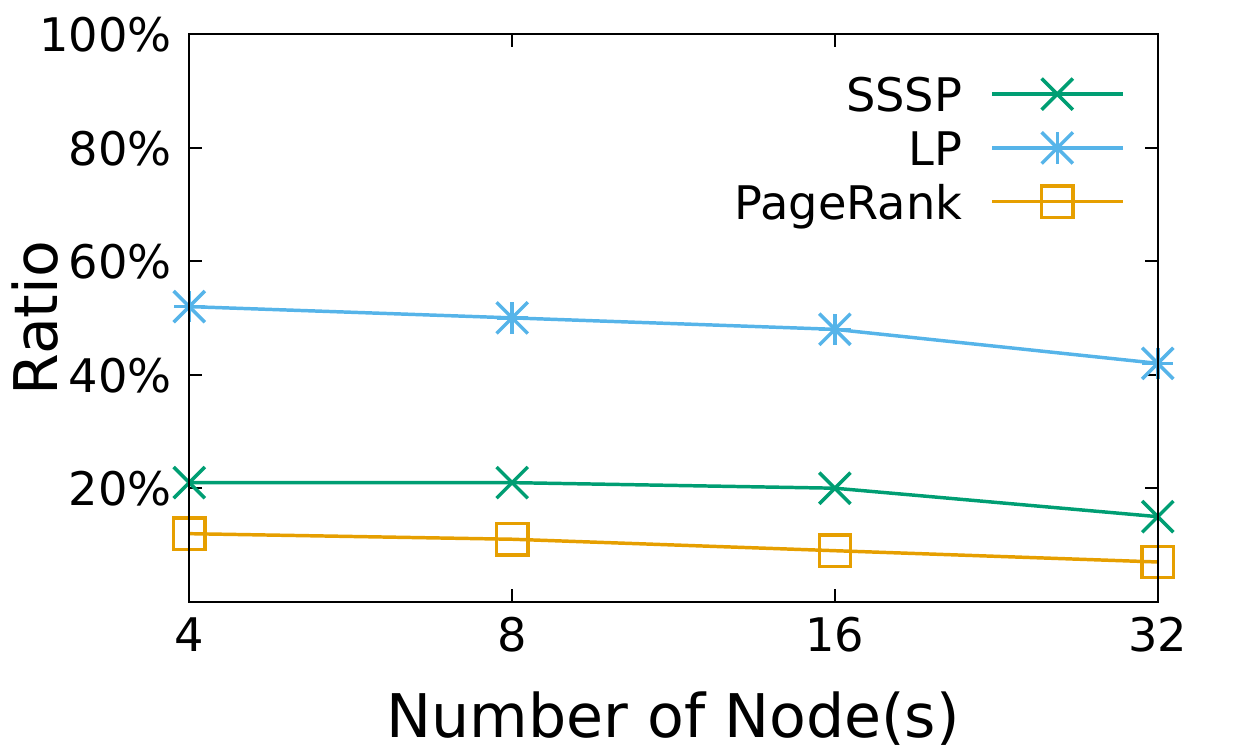}
        \caption{PowerGraph, Orkut}
        \label{fig:EXP_MidCost_PowerGraph} 
    \end{subfigure}
    \begin{subfigure}[t]{0.49\linewidth}
        \centering
        \includegraphics[width=\linewidth]{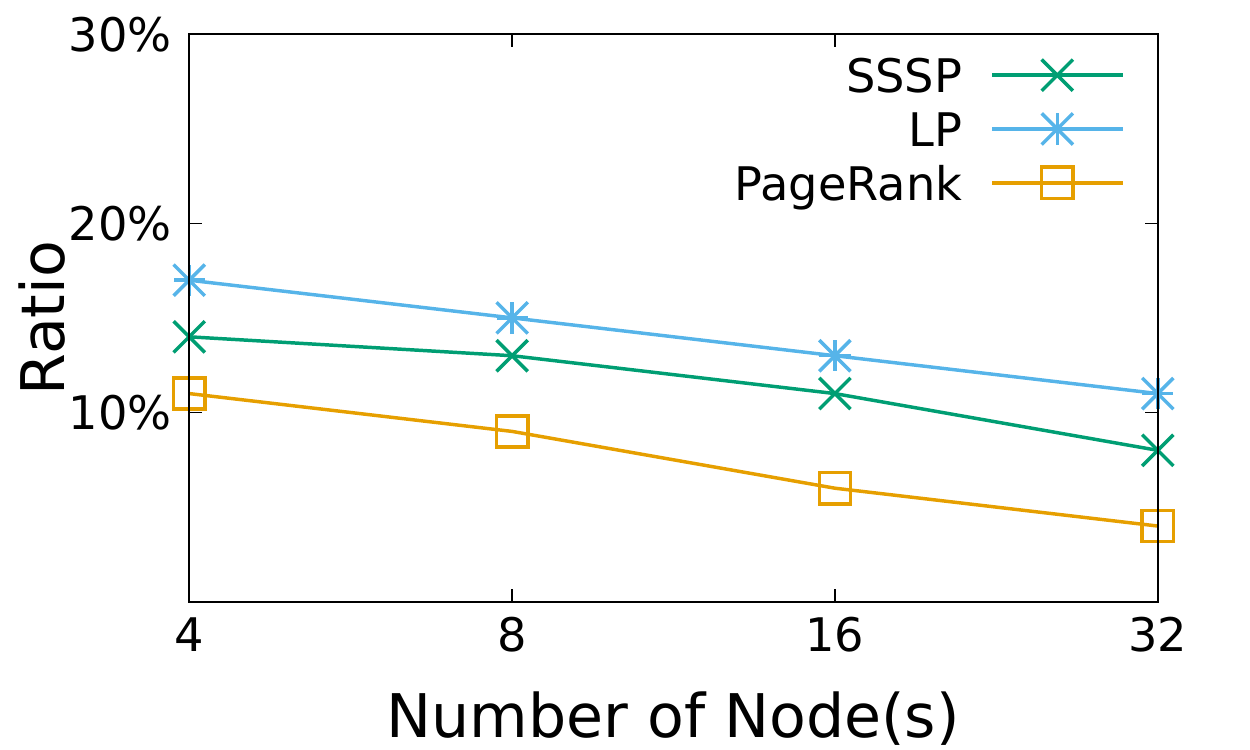}
        \caption{GraphX, Orkut}
        \label{fig:EXP_MidCost_GraphX}
    \end{subfigure}
    \caption{Middleware Cost Ratio}
    \label{fig:EXP_MidCost}
\end{minipage}
\begin{minipage}[t]{0.24\linewidth}
    \centering
    \begin{subfigure}[t]{\linewidth}
        \centering
        \includegraphics[width=\linewidth]{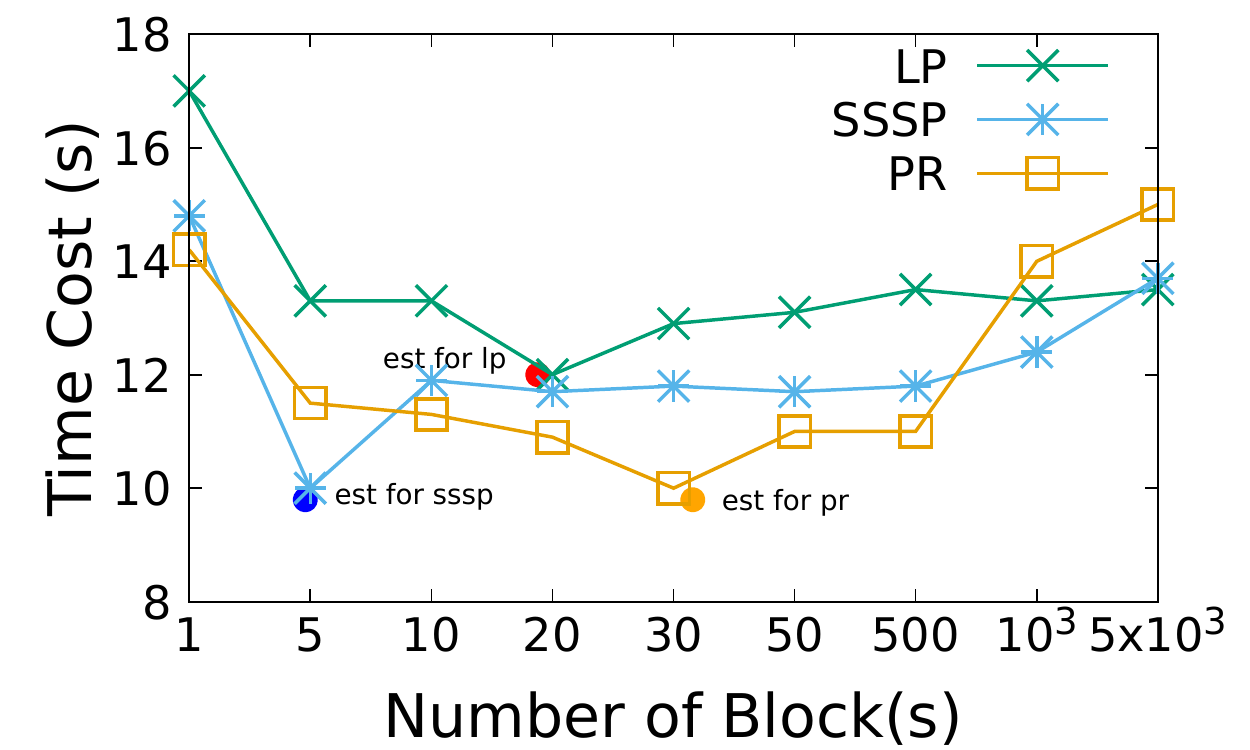}
        \caption{Estimation vs Real}
    \end{subfigure}
    \caption{Estimating $s_{opt}$}
    \label{fig:EXP_PipelineShuffleBlockSizeSelection}
\end{minipage}
\vspace{-10pt}
\end{figure*}

\subsection{Setup}
\label{subsec:setup}

We conduct experiments on a set of representative graph algorithms, such as Bellman-Ford (SSSP-BF), PageRank (PR), and Label Propagation algorithm (LP), by varying datasets of different distributions and scales\footnote{For SSSP-BF, we use 4 vertices as source vertices and calculate their SSSPs simultaneously to make it more compute-intensive. For LP, we limit the iterations to 15 times to avoid unlimited computation on specific datasets.}.
For experiments, we use a series of $6$ real datasets which are commonly used for graph systems testing,
as shown in Table \ref{tbl:dataset}. 
By default, Orkut is used, since it has the highest vertex degree among the $6$\footnote{The workload of a distributed node is proportional to the number of edges stored in it \cite{nature, DBLP:conf/osdi/GonzalezLGBG12}. Accordingly, a dataset with high vertex degrees offers high computation workload per unit amount of data.}.

\begin{table}[h!]
\footnotesize
\caption{Datasets}
\label{tbl:dataset}
  \centering
  \begin{tabular}{|r|r|r|r|}
    \hline
    Dataset & Vertex & Edge & Type\\
    \hline
    Orkut\cite{mislove-2007-socialnetworks} & 3.07M & 117.18M & Social\\
    \hline
    Wiki-topcats\cite{DBLP:conf/kdd/YinBLG17} & 1.79M & 28.51M & Network\\
    \hline
    LiveJournal\cite{DBLP:conf/kdd/BackstromHKL06} & 4.84M & 68.99M & Social\\
    \hline
    WRN\cite{nr} & 23.9M & 28.9M & Road\\
    \hline
    Twitter\cite{DBLP:conf/www/BoldiV04, DBLP:conf/www/BoldiRSV11} & 41.65M & 1.468B & Social\\
    \hline
    UK-2007-02\cite{url:UK-2007-02} & 110.1M & 3.945B & Social\\
    \hline
  \end{tabular}
\end{table}

For testing the scalability, we build a GPU cluster with 6 physical nodes, each of which is equipped with CPU Xeon E5-2698 v4 ($2.20$GHz, $20$ cores) and $2$ NVIDIA V$100$ GPUs ($16$GB GPU memory on each GPU).
Other experiments are run at a NVIDIA DGX workstation with CPU E5-2698 ($2.20$GHz, $20$ cores), and $4$ NVIDIA V$100$ GPUs.
For the middleware accelerator abstraction, we treat CPU in one node as an accelerator which has a 20-thread multithread processing model, and we treat each GPU as an accelerator which has 1024-thread multithread processing model.
We build our system with Ubuntu 16.04.5 LTS and deploy a Nvidia-docker framework for simulating the distributed environment with heterogeneous processors.
We construct a cross compilation solution by using maven, sbt and cmake to manage and config the global project dependencies. For Spark runtime, we use Java 8 and Scala 2.11. For local C++ and CUDA programming, g++$7$ and CUDA $10.0$ are used. The source code is available in \href{https://github.com/thoh-testarossa/GX-Plug}{GX-Plug repository}. Also, code for GraphX integration can be found at \href{https://github.com/Kamosphere/GraphXwithGPU}{GraphXwithGPU repository}\cite{DBLP:conf/wise/LiZKGX21}, and code for PowerGraph integration can be found \href{https://github.com/cave-g-f/PowerGraph-GPU}{PowerGraph-GPU repository}. 

\subsection{Results}



\subsubsection{Results on real graphs}

%
%
Figure \ref{fig:EXP_BenchmarkReal} compares the performance of GraphX and PowerGraph with non-acclerator(no prefix), CPU-integrated (prefix CPU+) and GPU-integrated (prefix GPU+) The y-axis is in log scale.
In relatively computation-dense applications such as LP and SSSP-BF, acceleration can be observed in total time.

On GraphX, compared with GraphX, GPU+GraphX achieves up to 7x acceleration in SSSP-BF, and up to 20x acceleration in LP algorithm. CPU+GraphX also achieves up to 4x acceleration in SSSP-BF algorithm and up to 5x acceleration in LP algorithm.
On PowerGraph, GPU+PowerGraph achieves up to 25x acceleration in SSSP-BF algorithm and up to 15x acceleration in LP algorithm. CPU+PowerGraph also achieves up to 5x acceleration in SSSP-BF algorithm and up to 10x acceleration in LP algorithm.
These results verify the effectiveness of the middleware.

The results on the scalability are shown in Figure~\ref{fig:EXP_BenchmarkCapacityScale}.
First, we compare three competitors, PowerGraph+GX-plug, Gunrock~\cite{DBLP:conf/ppopp/WangDPWRO16}, and Lux~\cite{DBLP:journals/pvldb/JiaKSMEA17}, by varying the number of GPUs in Figure~\ref{fig:EXP_BenchmarkCapacityScale} (a-b).
Gunrock is a single-node single-GPU graph system, and Lux is a multi-node multi-GPU graph system.
In Figure~\ref{fig:EXP_BenchmarkCapacityScale} (a), the result (Orkut, PageRank) shows that the runtime cost decreases w.r.t. the number GPUs. Gunrock performs the best on the single-GPU setting, but the multi-GPU setting is not supported. 
When there are more than $2$ GPUs, the performance of PowerGraph+GX-plug is better than Lux, and the lead is growing, showing better scalability w.r.t. the number of GPUs.
We proceed to evaluate the scalability on larger datasets, i.e., Twitter and UK-2007, in Figure~\ref{fig:EXP_BenchmarkCapacityScale} (b).
Gunrock gets overflowed on the two datasets, because it only supports the single-GPU setting and the graph data cannot be accommodated by a single GPU.
PowerGraph+GX-plug performs better than Lux on the two datasets. For example, PowerGraph+GX-plug is about $40$\% faster than Lux when processing Twitter with $4$ GPUs.
The technology pathways of Lux and GX-plug are different.
The former focuses on exploiting GPU internal mechanisms, while the latter explores more optimizations on the upper system end, e.g., synchronization skipping, which may become more critical for the scalability on large datasets.
There is no result for using $4$ GPUs on UK-2007, for all methods, because the system GPU memory capacity is exceeded. Then, we examine the scalability of PowerGraph+GX-plug on different graph algorithms in Figure~\ref{fig:EXP_BenchmarkCapacityScale} (c).
It can be observed that the sublinear speedup in computation is achieved. In particular, the runtime cost of SSSP-BF decreases from $14$s to $7$s, if the number of GPUs is increased from $2$ to $4$.
The result of mixing and matching different accelerators, i.e., CPUs and GPUs, is shown in Figure~\ref{fig:EXP_BenchmarkCapacityScale} (d).
It shows that the runtime cost decreases if the computation power increases. We will discuss the workload balancing on the heterogeneous system.

\subsubsection{Effect of Pipeline Shuffle}

Figure~\ref{fig:EXP_PipelineShuffle} shows the experiment results of the performance of pipeline shuffle mechanism. We consider $3$ competitors, ``Pipeline*'', ``Pipeline'' and ``Without pipeline'', in corresponding to the results with optimal blocksize, fixed blocksize, and the one without pipeline parallelism. Experiment shows that ``Pipeline*'' can achieve 30\%-50\% acceleration rate compared with ``withoutPipeline''. Also, ``Pipeline*'' can improve pipeline performance as 20\%-30\%, compared with ``Pipeline''.


\subsubsection{Effect of Synchronization Caching \& Skipping}

Figure~\ref{fig:EXP_SynchronizationCachingSkipping} (a) shows the performance of the synchronization caching mechanism. We use both synthetic and real graph datasets as input, and use SSSP-BF algorithm for testing the workload. The experiment result shows that we can get 2-3x acceleration in GraphX integrations. For the results on PowerGraph, it is much more efficient than GraphX. We can get up to 150\% acceleration in both synthetic and real datasets.

Figure~\ref{fig:EXP_SynchronizationCachingSkipping} (b) shows the performance of synchronization skipping mechanism. We use SSSP-BF algorithm for testing the workload, and count the number of iterations skipped on both synthetic and real datasets. We also compare the result with the number of  iterations when synchronization skipping mechanism is disabled.
For real datasets, the synchronization skipping mechanism achieves 60\%-90\% decrease of the number of iterations.
However, the effect on synthetic dataset is insignificant, where the data are more uniform, due the random generation of nodes and edges. For real datasets, there tends to be more clusters of dense partitions, leading to better partitioning results that triggers synchronization skipping.

\subsubsection{Effect of Workload Balancing}

Figure~\ref{fig:EXP_WorkloadBalancing} shows the results of workload balancing. We compare the difference of system performance with and without workload balancing. Also, we plot the best performance can be achieved in accordance to our estimation model as discussed in Section~\ref{subsec:WorkloadBalancing}.

Figure~\ref{fig:EXP_WorkloadBalancing} (a)
shows the scenario in which the hardware configuration of distributed nodes are fixed, and the partitioning strategy can be tuned (Case 1, Section~\ref{subsec:WorkloadBalancing}).
We construct two distributed nodes for the experiment. One node contains $1$ GPU + $1$ CPU, and the other contains $3$ GPUs + $1$ CPU.
We evenly partition the graph dataset to all nodes, which is the default setting of distributed graph systems, and is denoted as ``Not Balanced''.
We compare it to the one with our balancing strategy as discussed in Section~\ref{subsec:WorkloadBalancing}, which is denoted as ``Balanced''.
It shows that the workload balancing can significantly improve the system performance. Also, the balanced result is very close to the theoretically optimal result.

Figure~\ref{fig:EXP_WorkloadBalancing} (b) shows the scenario in which the partitioned results are fixed, and the hardware configuration can be tuned (Case 2, Section~\ref{subsec:WorkloadBalancing}). We construct $2$ distributed nodes with the same hardware configuration.
We vary the data load of distributed nodes to observe the effect of hardware configuration tuning.
Without balancing techniques, both distributed nodes are with $1$ GPU, denoted as ``Not Balanced''.
With balancing techniques, we can estimate the number of GPUs needed in accordance to the data load and dynamically allocate appropriate number GPUs, denoted as ``Balanced''.
It shows that the workload balancing can significantly improve the system performance. Also, the balanced result is very close to the theoretically optimal result, demonstrating the merits of workload balancing strategies.



\subsubsection{Effect of Runtime Isolation}

We hereby examine the performance of computation daemon on the runtime isolation by designing a comparative test to compare the influence of GPU initialization between daemon-agent based solution and direct GPU call solution. A larger number of iterations corresponds to a higher number of times of CPU-GPU runtime environment switching.
Results in Figure~\ref{fig:EXP_GPUInit} (11 iterations) show that our solution significantly reduces unnecessary initialization costs.
The benefits would be amplified when the number of iterations is increased.

\subsubsection{Middleware Scalability}
\label{subsubsec:Middlewarecost}


We examine the scalability of our middleware, by varying the number of distributed nodes, in Figure~\ref{fig:EXP_MidCost}.
It plots the ratio of time cost taken by the middleware to the cost of the entire system, for different graph tasks on different distributed systems, e.g., PowerGraph and GraphX.
It can be observed that, for all graph tasks, the time ratio of the middware decreases w.r.t. the increase of number of distributed nodes. The downhill trend reflects good scalability of the middleware in a larger scaled distributed computing environment, where the cost can be dominated by the gradually enlarged synchronization overhead of distributed system side.
Also, the time ratios of middleware are mostly between 10\% and 20\%, especially for algorithms with high operational intensities. Particularly, PageRank takes only about 10\% of total cost in a distributed system with 32 nodes. LP is different, since it is a fully iterative algorithm, corresponding to a low operational intensity.

In summary, the low cost ratio and the downhill trend demonstrate good scalability of our middleware.

\subsubsection{Block Size Selection}
\label{subsubsec:BSizeSelection}

To examine the effect of the block size selection, we measure the pipeline performance in different with different amount of blocks $s$, in Figure~\ref{fig:EXP_PipelineShuffleBlockSizeSelection}. We also compare the estimated $s_{opt} = \frac{d}{b_{opt}}$ with the real result\footnote{Coefficients are tested as follows: for SSSP: ($k_{1}$, $k_{2}$, $k_{3}$, $a$) = (0.03, 0.51, 0.09, 84671); for PR: ($k_{1}$, $k_{2}$, $k_{3}$, $a$) = (0.02, 0.58, 0.1, 1970); for SSSP: ($k_{1}$, $k_{2}$, $k_{3}$, $a$) = (0.003, 0.59, 0.006, 498).}.
For both LP and PageRank algorithms, we use the first iteration as the testing data. For SSSP algorithm, we use $6$-th iteration as the testing data, since the computation workload is the maximum during the entire execution. We can find that when $s$ increases, iteration time cost first decreases, and then increases. Thus, for $b = \frac{d}{s}$, when $b$ increases, iteration time cost also tends to first decrease, and then increase.

We also give our estimated $s$ following the analysis here for the $3$ different algorithms. It shows that when real $b$ and $s$ are close to the estimated result, the pipeline performance is also close to the estimated one, showing the accuracy of our estimation. Also, the optimal performance can be reached when real $b$ and $s$ are close to our estimation.

\section{Related Works}

\label{sec:relate}



{\bf Distributed CPU-based Systems.} With the prosperity of distributed system, people investigate common operator sets inside diverse graph primitives for scaling out in the distributed environment.
As a forerunner, Pregel\cite{DBLP:conf/sigmod/MalewiczABDHLC10} is proposed by Google on large-scale graph computing, following the BSP model. In BSP, graph computation are divided into iterations and intermediate results can be globally synchronized at barriers called super-steps.
GraphLab\cite{DBLP:journals/pvldb/LowGKBGH12} allows asynchronous computation and dynamic asynchronous scheduling, whose programming model also isolates the user-defined algorithm from data movement.
To achieve better workload balancing on natural graphs, PowerGraph\cite{DBLP:conf/osdi/GonzalezLGBG12} uses a more flexible GAS abstraction for power-law graphs.

There are also many embedded graph processing systems built on existing distributed systems to gain the benefits of task scheduling and data management.
GraphX\cite{DBLP:conf/osdi/GonzalezXDCFS14} is one of the most successful representative built on top of Apache Spark\cite{url:Spark}.
HaLoop\cite{DBLP:journals/vldb/BuHBE12} is a similar distributed graph processing system, in particular, extended from Hadoop\cite{url:Hadoop}.

However, most works pay little attention to the computation intensiveness of large-scale graph processing.
Efforts on scheduling balancing and data accessing also incur extra cost in computation, making the system even slower than single-node solutions. It is thus desired to have a scale-up solution for distributed graph systems.

{\bf Single-node Parallel Graph Algorithms.}
There also exist hardwired graph primitive implementations for the single-node environment.
Merrill et al. propose linear parallelization of BFS algorithm on GPU \cite{DBLP:conf/ppopp/MerrillGG12}.
Soman et al. studies graph algorithms based on two PRAM connected-component \cite{DBLP:conf/spaa/Greiner94}.
Several parallel Betweenness Centrality implementations are available on GPU based on the work of Brandes et al.~\cite{brandes2001betweeness}. Davidson et al.\cite{DBLP:conf/ipps/DavidsonBGO14} propose a work-efficient Single-Source Shortest Path algorithm on GPU.

Low-level graph parallel solutions can have best performance only on specific computation tasks, but are not general for diversified graph applications.
Also, the hardwired primitives are challenging to even skilled algorithm engineers, making such solutions hard for being deployed in real systems and applications.

{\bf High-level GPU Programming Model.} There are also existing works on high-level graph operations for GPU.
Zhong and He devise Medusa~\cite{DBLP:journals/sigmod/ZhongH14} on a high-level GPU-based system for parallel graph processing using a message-passing model, which is arguably the earliest work for GPGPU development for graphs.
CuSha~\cite{DBLP:conf/hpdc/KhorasaniVGB14}, targeting a GAS abstraction, avoids non-coalesced memory accessing  and avoids irregular memory accessing.
Gunrock~\cite{DBLP:conf/ppopp/WangDPWRO16} implements a novel data-centric abstraction centered on operations on a vertex or edge frontier rather than designing an abstraction for computation.
Recently, there are a few works on GPU graph processing system built on distributed systems, among which
Lux~\cite{DBLP:journals/pvldb/JiaKSMEA17} is one of the representatives. Users can use GPUs in multiple physical nodes for efficient computation.
However, without the support of mature distributed systems, Lux faces a series of challenges, such as robust distributed data management, scheduling balancing, effective fault recovery, and efficient data synchronization with physical layers, and thus falls short in addressing technical issues arise in large-scale graph data management and analytics.

\section{Conclusion}

\label{sec:con}

In this paper, we propose a middleware for the integration of heterogenous distributed graph systems and accelerators. 
Our middleware is versatile in the sense that it supports different programming models, computation models, and runtime environments. 
For reinforcing the middleware performance, we devise a series of techniques, such as pipeline shuffle, synchronization caching and skipping, partitioning, and parameter configuration for intra-, inter, and beyond iteration optimization.
Extensive experiments show that our middleware achieves good performance in large-scale graph processing.


\bibliographystyle{IEEEtran}
\bibliography{vldb_sample}

\begin{thebibliography}{10}
\providecommand{\url}[1]{#1}
\csname url@samestyle\endcsname
\providecommand{\newblock}{\relax}
\providecommand{\bibinfo}[2]{#2}
\providecommand{\BIBentrySTDinterwordspacing}{\spaceskip=0pt\relax}
\providecommand{\BIBentryALTinterwordstretchfactor}{4}
\providecommand{\BIBentryALTinterwordspacing}{\spaceskip=\fontdimen2\font plus
\BIBentryALTinterwordstretchfactor\fontdimen3\font minus
  \fontdimen4\font\relax}
\providecommand{\BIBforeignlanguage}[2]{{%
\expandafter\ifx\csname l@#1\endcsname\relax
\typeout{** WARNING: IEEEtran.bst: No hyphenation pattern has been}%
\typeout{** loaded for the language `#1'. Using the pattern for}%
\typeout{** the default language instead.}%
\else
\language=\csname l@#1\endcsname
\fi
#2}}
\providecommand{\BIBdecl}{\relax}
\BIBdecl

\bibitem{HPCNeuroscience}
K.~Bouchard, J.~Aimone, M.~Chun, T.~Dean, M.~Denker, M.~Diesmann, D.~Donofrio,
  L.~Frank, N.~Kasthuri, C.~Koch, O.~Ruebel, H.~Simon, F.~Sommer, and
  M.~Prabhat, ``High-performance computing in neuroscience for data-driven
  discovery, integration, and dissemination,'' \emph{Neuron}, vol.~92, pp.
  628--631, 11 2016.

\bibitem{DBLP:conf/osdi/GonzalezXDCFS14}
J.~E. Gonzalez, R.~S. Xin, A.~Dave, D.~Crankshaw, M.~J. Franklin, and
  I.~Stoica, ``Graphx: Graph processing in a distributed dataflow framework,''
  in \emph{OSDI}, J.~Flinn and H.~Levy, Eds., 2014, pp. 599--613.

\bibitem{DBLP:conf/osdi/GonzalezLGBG12}
J.~E. Gonzalez, Y.~Low, H.~Gu, D.~Bickson, and C.~Guestrin, ``Powergraph:
  Distributed graph-parallel computation on natural graphs,'' in \emph{OSDI},
  2012, pp. 17--30.

\bibitem{DBLP:conf/ppopp/WangDPWRO16}
Y.~Wang, A.~A. Davidson, Y.~Pan, Y.~Wu, A.~Riffel, and J.~D. Owens, ``Gunrock:
  a high-performance graph processing library on the {GPU},'' in \emph{PPoPP},
  2016, pp. 11:1--11:12.

\bibitem{DBLP:conf/fpga/ChenTCHWC21}
X.~Chen, H.~Tan, Y.~Chen, B.~He, W.~Wong, and D.~Chen, ``Thundergp: Hls-based
  graph processing framework on fpgas,'' in \emph{{FPGA} '21: The 2021
  {ACM/SIGDA} International Symposium on Field Programmable Gate Arrays,
  Virtual Event, USA, February 28 - March 2, 2021}, L.~Shannon and M.~Adler,
  Eds.\hskip 1em plus 0.5em minus 0.4em\relax {ACM}, 2021, pp. 69--80.

\bibitem{DBLP:journals/cacm/SakrBVIAAAABBDV21}
S.~Sakr, A.~Bonifati, H.~Voigt, A.~Iosup, K.~Ammar, R.~Angles, W.~G. Aref,
  M.~Arenas, M.~Besta, P.~A. Boncz, K.~Daudjee, E.~D. Valle, S.~Dumbrava,
  O.~Hartig, B.~Haslhofer, T.~Hegeman, J.~Hidders, K.~Hose, A.~Iamnitchi,
  V.~Kalavri, H.~Kapp, W.~Martens, M.~T. {\"{O}}zsu, E.~Peukert, S.~Plantikow,
  M.~Ragab, M.~Ripeanu, S.~Salihoglu, C.~Schulz, P.~Selmer, J.~F. Sequeda,
  J.~Shinavier, G.~Sz{\'{a}}rnyas, R.~Tommasini, A.~Tumeo, A.~Uta, A.~L.
  Varbanescu, H.~Wu, N.~Yakovets, D.~Yan, and E.~Yoneki, ``The future is big
  graphs: a community view on graph processing systems,'' \emph{Commun. {ACM}},
  vol.~64, no.~9, pp. 62--71, 2021.

\bibitem{url:NVIDIASpark3}
NVIDIA, ``Gpu-accelerated apache spark,''
  \url{https://www.nvidia.com/en-us/deep-learning-ai/solutions/data-science/apache-spark-3/},
  2020.

\bibitem{DBLP:conf/asplos/LindermanCWM08}
M.~D. Linderman, J.~D. Collins, H.~Wang, and T.~H. Meng, ``Merge: a programming
  model for heterogeneous multi-core systems,'' in \emph{Proceedings of the
  13th International Conference on Architectural Support for Programming
  Languages and Operating Systems, {ASPLOS} 2008, Seattle, WA, USA, March 1-5,
  2008}, S.~J. Eggers and J.~R. Larus, Eds.\hskip 1em plus 0.5em minus
  0.4em\relax {ACM}, 2008, pp. 287--296.

\bibitem{DBLP:conf/ipps/HaidarCYLTKD14}
A.~Haidar, C.~Cao, A.~YarKhan, P.~Luszczek, S.~Tomov, K.~Kabir, and J.~J.
  Dongarra, ``Unified development for mixed multi-gpu and multi-coprocessor
  environments using a lightweight runtime environment,'' in \emph{2014 {IEEE}
  28th International Parallel and Distributed Processing Symposium, Phoenix,
  AZ, USA, May 19-23, 2014}.\hskip 1em plus 0.5em minus 0.4em\relax {IEEE}
  Computer Society, 2014, pp. 491--500.

\bibitem{url:DatabrickSpark3}
Databrick, ``Deep dive into gpu support in apache spark 3.x,''
  \url{https://databricks.com/session_na20/deep-dive-into-gpu-support-in-apache-spark-3-x},
  2020.

\bibitem{DBLP:conf/sigmod/MalewiczABDHLC10}
G.~Malewicz, M.~H. Austern, A.~J.~C. Bik, J.~C. Dehnert, I.~Horn, N.~Leiser,
  and G.~Czajkowski, ``Pregel: a system for large-scale graph processing,'' in
  \emph{SIGMOD}, 2010, pp. 135--146.

\bibitem{DBLP:journals/pvldb/AmmarO18}
K.~Ammar and M.~T. {\"{O}}zsu, ``Experimental analysis of distributed graph
  systems,'' \emph{Proc. {VLDB} Endow.}, vol.~11, no.~10, pp. 1151--1164, 2018.

\bibitem{DBLP:journals/pvldb/Ozsu2019Keynote}
M.~T. {\"O}zsu, ``Graph processing: A panoramic view and some open problems,''
  \emph{VLDB}, 2019.

\bibitem{DBLP:journals/corr/abs-1904-02241}
X.~Chen, ``Graphcage: Cache aware graph processing on gpus,'' \emph{CoRR}, vol.
  abs/1904.02241, 2019.

\bibitem{DBLP:journals/taco/WangWLWZG21}
P.~Wang, J.~Wang, C.~Li, J.~Wang, H.~Zhu, and M.~Guo, ``Grus: Toward
  unified-memory-efficient high-performance graph processing on {GPU},''
  \emph{{ACM} Trans. Archit. Code Optim.}, vol.~18, no.~2, pp. 22:1--22:25,
  2021.

\bibitem{8763922}
A.~Li, S.~L. Song, J.~Chen, J.~Li, X.~Liu, N.~R. Tallent, and K.~J. Barker,
  ``Evaluating modern gpu interconnect: Pcie, nvlink, nv-sli, nvswitch and
  gpudirect,'' \emph{IEEE Transactions on Parallel and Distributed Systems},
  vol.~31, no.~1, pp. 94--110, 2020.

\bibitem{url:nvlink}
NVIDIA, ``Nvlink and nvswitch: The building blocks of advanced multi-gpu
  communication,'' \url{https://www.nvidia.com/en-us/data-center/nvlink/},
  2021.

\bibitem{DBLP:conf/ccgrid/FuVSIY18}
H.~Fu, M.~G. Venkata, S.~Salman, N.~Imam, and W.~Yu, ``Shmemgraph: Efficient
  and balanced graph processing using one-sided communication,'' in \emph{18th
  {IEEE/ACM} International Symposium on Cluster, Cloud and Grid Computing,
  {CCGRID} 2018, Washington, DC, USA, May 1-4, 2018}, E.~El{-}Araby, D.~K.
  Panda, S.~Gesing, A.~W. Apon, V.~V. Kindratenko, M.~Cafaro, and A.~Cuzzocrea,
  Eds.\hskip 1em plus 0.5em minus 0.4em\relax {IEEE} Computer Society, 2018,
  pp. 513--522.

\bibitem{DBLP:journals/pvldb/JiaKSMEA17}
Z.~Jia, Y.~Kwon, G.~M. Shipman, P.~S. McCormick, M.~Erez, and A.~Aiken, ``A
  distributed multi-gpu system for fast graph processing,'' \emph{VLDB},
  vol.~11, no.~3, pp. 297--310, 2017.

\bibitem{nature}
R.~Albert, H.~Jeong, and A.-L. Barab\'{a}si, ``Error and attack tolerance of
  complex networks,'' \emph{Nature}, vol. 406, no. 6794, pp. 378--382, 2000.

\bibitem{donato2004large}
D.~Donato, L.~Laura, S.~Leonardi, and S.~Millozzi, ``Large scale properties of
  the webgraph,'' \emph{The European Physical Journal B}, vol.~38, no.~2, pp.
  239--243, 2004.

\bibitem{DBLP:conf/icde/KDY21}
D.~Kong, X.~Xie, and Z.~Zhang, ``Clustering-based partitioning for large web
  graphs,'' in \emph{38th {IEEE} International Conference on Data Engineering,
  {ICDE} 2022, Kuala Lumpur, Malaysia, May 9-12, 2022}.\hskip 1em plus 0.5em
  minus 0.4em\relax {IEEE}, 2022.

\bibitem{DBLP:journals/corr/HalliCM14}
N.~A. Halli, H.~Charles, and J.~M{\'{e}}haut, ``Performance comparison between
  java and {JNI} for optimal implementation of computational micro-kernels,''
  \emph{CoRR}, vol. abs/1412.6765, 2014.

\bibitem{mislove-2007-socialnetworks}
A.~Mislove, M.~Marcon, K.~P. Gummadi, P.~Druschel, and B.~Bhattacharjee,
  ``{Measurement and Analysis of Online Social Networks},'' in
  \emph{Proceedings of the 5th ACM/Usenix Internet Measurement Conference
  (IMC'07)}, San Diego, CA, October 2007.

\bibitem{DBLP:conf/kdd/YinBLG17}
H.~Yin, A.~R. Benson, J.~Leskovec, and D.~F. Gleich, ``Local higher-order graph
  clustering,'' in \emph{Proceedings of the 23rd {ACM} {SIGKDD} International
  Conference on Knowledge Discovery and Data Mining, Halifax, NS, Canada,
  August 13 - 17, 2017}.\hskip 1em plus 0.5em minus 0.4em\relax {ACM}, 2017,
  pp. 555--564.

\bibitem{DBLP:conf/kdd/BackstromHKL06}
L.~Backstrom, D.~P. Huttenlocher, J.~M. Kleinberg, and X.~Lan, ``Group
  formation in large social networks: membership, growth, and evolution,'' in
  \emph{Proceedings of the Twelfth {ACM} {SIGKDD} International Conference on
  Knowledge Discovery and Data Mining, Philadelphia, PA, USA, August 20-23,
  2006}, T.~Eliassi{-}Rad, L.~H. Ungar, M.~Craven, and D.~Gunopulos, Eds.\hskip
  1em plus 0.5em minus 0.4em\relax {ACM}, 2006, pp. 44--54.

\bibitem{nr}
R.~A. Rossi and N.~K. Ahmed, ``The network data repository with interactive
  graph analytics and visualization,'' in \emph{AAAI}, 2015.

\bibitem{DBLP:conf/www/BoldiV04}
P.~Boldi and S.~Vigna, ``The webgraph framework {I:} compression techniques,''
  in \emph{Proceedings of the 13th international conference on World Wide Web,
  {WWW} 2004, New York, NY, USA, May 17-20, 2004}, S.~I. Feldman, M.~Uretsky,
  M.~Najork, and C.~E. Wills, Eds.\hskip 1em plus 0.5em minus 0.4em\relax
  {ACM}, 2004, pp. 595--602.

\bibitem{DBLP:conf/www/BoldiRSV11}
P.~Boldi, M.~Rosa, M.~Santini, and S.~Vigna, ``Layered label propagation: a
  multiresolution coordinate-free ordering for compressing social networks,''
  in \emph{Proceedings of the 20th International Conference on World Wide Web,
  {WWW} 2011, Hyderabad, India, March 28 - April 1, 2011}, S.~Srinivasan,
  K.~Ramamritham, A.~Kumar, M.~P. Ravindra, E.~Bertino, and R.~Kumar,
  Eds.\hskip 1em plus 0.5em minus 0.4em\relax {ACM}, 2011, pp. 587--596.

\bibitem{url:UK-2007-02}
D.~dataset, ``uk-2007-02 dataset,''
  \url{https://law.di.unimi.it/webdata/uk-2007-02/}, 2007.

\bibitem{DBLP:conf/wise/LiZKGX21}
Q.~Li, K.~Zou, D.~Kong, H.~Guan, and X.~Xie, ``Gpugraphx: {A} gpu-aided
  distributed graph processing system,'' in \emph{{WISE} 2021 - 22nd
  International Conference on Web Information Systems Engineering}, vol.
  13081.\hskip 1em plus 0.5em minus 0.4em\relax Springer, 2021, pp. 501--509.

\bibitem{DBLP:journals/pvldb/LowGKBGH12}
Y.~Low, J.~Gonzalez, A.~Kyrola, D.~Bickson, C.~Guestrin, and J.~M. Hellerstein,
  ``Distributed graphlab: {A} framework for machine learning in the cloud,''
  \emph{VLDB}, vol.~5, no.~8, pp. 716--727, 2012.

\bibitem{url:Spark}
``Spark,'' \url{http://spark.apache.org}, 2019.

\bibitem{DBLP:journals/vldb/BuHBE12}
Y.~Bu, B.~Howe, M.~Balazinska, and M.~D. Ernst, ``The haloop approach to
  large-scale iterative data analysis,'' \emph{VLDB J.}, vol.~21, no.~2, pp.
  169--190, 2012.

\bibitem{url:Hadoop}
``Hadoop,'' \url{http://hadoop.apache.org}, 2019.

\bibitem{DBLP:conf/ppopp/MerrillGG12}
D.~Merrill, M.~Garland, and A.~S. Grimshaw, ``Scalable {GPU} graph traversal,''
  in \emph{PPoPP}, 2012, pp. 117--128.

\bibitem{DBLP:conf/spaa/Greiner94}
J.~Greiner, ``A comparison of parallel algorithms for connected components,''
  in \emph{SPAA}, 1994, pp. 16--25.

\bibitem{brandes2001betweeness}
U.~Brandes, ``A faster algorithm for betweenness centrality,'' \emph{Journal of
  Mathematical Sociology}, vol.~25, 2001.

\bibitem{DBLP:conf/ipps/DavidsonBGO14}
A.~A. Davidson, S.~Baxter, M.~Garland, and J.~D. Owens, ``Work-efficient
  parallel {GPU} methods for single-source shortest paths,'' in \emph{IPPS},
  2014, pp. 349--359.

\bibitem{DBLP:journals/sigmod/ZhongH14}
J.~Zhong and B.~He, ``Medusa: {A} parallel graph processing system on graphics
  processors,'' \emph{{SIGMOD} Record}, vol.~43, no.~2, pp. 35--40, 2014.

\bibitem{DBLP:conf/hpdc/KhorasaniVGB14}
F.~Khorasani, K.~Vora, R.~Gupta, and L.~N. Bhuyan, ``Cusha: vertex-centric
  graph processing on gpus,'' in \emph{HPDC}, 2014, pp. 239--252.

\end{thebibliography}




\end{document}